\theoremstyle{plain}
\newtheorem{lemma}{Lemma}
\newtheorem{thm}{Theorem}
\newtheorem{corollary}{Corollary}
\newtheorem{proposition}{Proposition}
\theoremstyle{definition}
\newtheorem{defi}{Definition}
\newtheorem{example}{Example}
\newtheorem{remark}{Remark}
\renewcommand{\vec}[1]{{\rm vec}(#1)}
\newcounter{app}
\title{Understanding and Generalizing Unique Decompositions of Generators of Dynamical Semigroups}
\author{Frederik vom Ende
	\\[1mm]{\footnotesize\it Dahlem Center for Complex Quantum
Systems, Freie Universität Berlin, Arnimallee 14, 14195 Berlin, Germany
 \& {frederik.vom.ende@fu-berlin.de}}\\[2ex]
}
\begin{document}

\maketitle
\begin{abstract}
We generalize the result of Gorini, Kossakowski, and Sudarshan [J{.} Math{.} Phys{.} 17:821, 1976] that every generator of a quantum-dynamical semigroup decomposes uniquely into 
a closed and a dissipative part, assuming the trace of both vanishes.
More precisely, we show that given any generator $L$ of a completely positive 
dynamical semigroup and any matrix $B$ there exists a unique matrix $K$ and 
a unique completely positive map $\Phi$ such that (i) $L=K(\cdot)+(\cdot)K^*
+\Phi$, (ii) the superoperator $\Phi(B^*(\cdot)B)$ has trace zero, and (iii) 
${\rm tr}(B^*K)$ is a real number.
The key to proving this is the relation between the trace of a completely 
positive map, the trace of its Kraus operators, and expectation values of its 
Choi matrix.
Moreover, we show that the above decomposition is orthogonal with respect to some $B$-weighted inner product.
\end{abstract}

\section{Introduction}
Over the last decade, advances in quantum engineering have surged interest in open quantum systems and dissipative dynamics.
Fields where this interest has shown include---but are not limited to---quantum thermodynamics \cite{MDAS19,SA20,Kosloff21,SCE22,GS23},
quantum control \cite{Roadmap2022,Kosloff21_2,HS21,OSID_thermal_res,PP23},
as well as many-body systems \cite{NR20} and quantum sensing \cite{SSG23}.
At the core of open systems theory is undoubtedly the Gorini-Kossakowski-Sudarshan-Lindblad (GKSL) \cite{GKS76,Lindblad76} equation---which, remarkably,
is still being investigated today from a mathematical \cite{CKKS21} as well as a modeling point-of-view \cite{Trushechkin21,Trushechkin22}.

The GKSL-equation is a first-order master equation that characterizes the dynamics of open quantum systems, assuming the system in question satisfies the Markovian approximation \cite{BreuPetr02}.
The GKSL-equation comprises a closed-system part (following the Liouville-von Neumann equation generated by some Hamiltonian $H$) as well as an open-system term $\bf\Gamma$ which models the system-environment interaction by means of so-called Lindblad operators $\{V_j\}_{j=1}^m$:
\begin{equation}\label{eq:gksl}
\dot\rho(t)=-i[H,\rho(t)]-\underbrace{\sum_{j=1}^m\Big( \frac12(V_j^*V_j\rho(t)+\rho(t)V_j^*V_j)-V_j\rho(t)V_j^* \Big)}_{{\bf\Gamma}(\rho(t))}\,,\quad\rho(0)=\rho_0\,.
\end{equation}
We refer to Eq.~\eqref{eq:liewedge_cptp} for the precise statement.
The operator $L:=-i[H,\cdot]-{\bf\Gamma}$ is sometimes called GKSL-generator.

Already in their seminal paper \cite{GKS76}, Gorini, Kossakoswki, and Sudarshan raised the question under what assumptions the closed and the dissipative part
of such a generator
can be separated unambiguously.
Given some GKSL-generator $L$, a sufficient condition they found is that if the Lindblad operators are traceless, then there exists a \textit{unique} traceless $H$ such that~\eqref{eq:gksl} holds, cf.~also \cite{Davies80unique}\,\footnote{
Actually, if the $V_j$ are traceless, it is known that the superoperator $-i[H,\cdot]$ is in some sense orthogonal to $\bf\Gamma$; for the precise statement,
refer to Sec.~\ref{sec_orth} below.
}.
This result has to be read with caution: it does \textit{not} state that the $V_j$ can be chosen unique, as this would be impossible given the ``unitary degree of freedom'' in the choice of the Lindblad operators, cf.~\cite[Eq.~(3.72) \& (3.73)]{BreuPetr02}.
However, this ambiguity can be eliminated by looking at the completely positive map \textit{generated} by the Lindblad operators, that is, by re-writing $L=-i[H,\cdot]-\sum_{j}( \frac12(V_j^*V_j(\cdot)+(\cdot)V_j^*V_j)-V_j(\cdot)V_j^* )$
as 
\begin{equation}\label{eq:L_Phi}
L=-i[H,\cdot]+\Phi-\Big\{\frac{\Phi^*({\bf1})}{2},\cdot\Big\}
\end{equation}
with $\Phi:=\sum_jV_j(\cdot)V_j^*$, where $\Phi^*({\bf1})=\sum_jV_j^*V_j$ and $\{\cdot,\cdot\}$ is the usual anticommutator.
On this level, the uniqueness result of \cite{GKS76} can be shown to be equivalent
(cf.~Lemma~\ref{lemma_trace_findim} below)
to the following:
For every GKSL-generator $L$ there exists a unique traceless Hamiltonian $H$ and a 
unique completely positive map $\Phi$ with ${\rm tr}(\Phi)=0$ such that~\eqref{eq:L_Phi} holds.

Taking another perspective, this result states that for Hamiltonians $H_1,H_2$ and \textit{traceless} completely positive maps $\Phi_1,\Phi_2$ a decomposition
$$
\Big(iH_2-iH_1+\frac{\Phi_1^*({\bf1})}{2}-\frac{\Phi_2^*({\bf1})}{2}\Big)(\cdot)+(\cdot)\Big(iH_2-iH_1+\frac{\Phi_1^*({\bf1})}{2}-\frac{\Phi_2^*({\bf1})}{2}\Big)^*=\Phi_1-\Phi_2
$$
is only possible if
$\Phi_1=\Phi_2$.
This is quite surprising for the following reason: the left-hand side of this equation is a Hermitian-preserving map\footnote{
We say $\Phi\in\mathcal L(\mathbb C^{n\times n})$ is Hermitian-preserving if $\Phi(X^*)=\Phi(X)^*$ for all $X\in\mathbb C^{n\times n}$.
}---let us abbreviate it by $K(\cdot)+(\cdot)K^*$---meaning
it can always be written as the difference of two completely positive maps \cite[Thm.~2.25]{Watrous18}.
Therefore, the condition of $\Phi_1,\Phi_2$ being traceless has to somehow prohibit such a decomposition.

The aim of this paper is to close this gap in understanding: how can it be that an decomposition $K(\cdot)+(\cdot)K^*=\Phi_1-\Phi_2$ into completely positive maps---which is always achievable---becomes impossible once the problem is restricted to traceless maps?
After all, the original proof of this \cite[Lemma~2.3]{GKS76} is a dimension-counting argument,
which does not provide any insight into this question.
%
%
%
In fact, we will not only close this gap by presenting an alternative proof based
on a connection between weighted traces and expectation values of the Choi matrix, this proof even applies to the setting of completely positive dynamical semigroups as well as to more general conditions than the plain trace of $\Phi$.
More precisely, our main result reads as follows:\medskip\smallskip

\noindent\textbf{Theorem~\ref{thm1} (Informal).}
{\it
Let any generator $L$ of a completely positive dynamical semigroup be given, that is, $e^{tL}$ is completely positive for all $t\geq 0$. Then for all $B$ with ${\rm Re}({\rm tr}(B))\neq 0$
there exist unique $K$ and a unique completely positive map $\Phi$ such that
\begin{itemize}
\item[(i)] the map $\Phi(B^*(\cdot)B)$ has trace zero (equivalently: for every set of Kraus operators $\{V_j\}_j$ of $\Phi$, it holds that ${\rm tr}(B^*V_j)=0$ for all $j$),
\item[(ii)] ${\rm Im}({\rm tr}(B^*K))=0$, and
\item[(iii)] $L=K(\cdot)+(\cdot)K^*+\Phi$.
\end{itemize}
}\smallskip

The condition ${\rm tr}(B^*V_j)=0$ for Lindblad operators 
has appeared in the past in the infinite-dimensional setting \cite[Thm.~30.16]{Parthasarathy92}; however,
the connection to unique decompositions of generators is new,
and this could be one way to establish an analogous result in infinite dimensions, cf.~Sec.~\ref{sec_outlook}

Now, Theorem~\ref{thm1} includes not only dynamics of open quantum systems (i.e.~completely positive trace-preserving evolutions), but also, for example,
probabilistic evolutions governed by operations (i.e.~completely positive trace-nonincreasing maps) \cite[Ch.~4]{Heinosaari12}.
In the case of trace-preserving dynamics, Theorem~\ref{thm1} boils down to the following:\medskip\smallskip

\noindent\textbf{Corollary~\ref{coro1} (Informal).}
{\it Given any GKSL-generator $L$ and any $B$ with ${\rm Re}({\rm tr}(B))\neq 0$,
there exists a unique Hermitian matrix $H$ as well as a unique completely positive map $\Phi$ such that
\begin{itemize}
\item[(i)] the map $\Phi(B^*(\cdot)B)$ has trace zero (equivalently: for every set of Kraus operators $\{V_j\}_j$ of $\Phi$ it holds that ${\rm tr}(B^*V_j)=0$ for all $j$),
\item[(ii)] ${\rm Re}({\rm tr}(B^* H))=\frac12{\rm Im}({\rm tr}(\Phi(B)))$, and
\item[(iii)] $L=-i[H,\cdot]+\Phi-\{\tfrac{\Phi^*({\bf1})}{2},\cdot\}$.
\end{itemize}
Moreover, if $B$ is Hermitian, then condition (ii) becomes ${\rm tr}(BH)=0$.} \medskip\smallskip

\noindent Setting $B={\bf1}$ in this corollary clearly recovers the uniqueness result of Gorini, Kossakowski, and Sudarshan.\medskip

This paper is organized as follows. In Sec.~\ref{sec:main} we introduce the mathematical tools we need, such as Lie wedges of semigroups of operators, or inner products of superoperators. Most importantly, this section features a lemma which connects general traces of completely positive maps with the trace of its Kraus operators (Lemma~\ref{lemma_trace_findim}).
Then Sec.~\ref{sec_mainres} is devoted to our main results: the precise formulations of Theorem~\ref{thm1} and Corollary~\ref{coro1} can be found in  Sec.~\ref{sec_mainres_unique},
while in Sec.~\ref{sec_orth} we show that the unique decomposition from Thm.~\ref{thm1} is even orthogonal with respect to an inner product induced by $B$.
Then in Sec.~\ref{sec_dec_pos} we ask to what extent complete positivity is necessary for Thm.~\ref{thm1} to hold: we will see that for general positive maps, such decompositions need not always exist.
Finally, we give some closing remarks and ideas for future research in Sec.~\ref{sec_outlook}

\section{Preliminaries}\label{sec:main}
First some notation:
given any vector space $\mathcal V$,
the collection of all linear maps $:\mathcal V\to\mathcal V$ will be denoted by $\mathcal L(\mathcal V)$.
If $\mathcal V=\mathbb C^{n\times n}$, then $\mathsf P(n)$ ($\mathsf{PTP}(n)$) denotes the subset of all positive (and trace-preserving) maps;
similarly, $\mathsf{CP}(n)$ ($\mathsf{CPTP}(n)$) is the collection of all completely positive (and trace-preserving) maps.
To test complete positivity, one usually employs the \textit{Choi matrix}: as proven in the seminal paper of Choi \cite{Choi75} a map $\Phi\in\mathcal L(\mathbb C^{n\times n})$ is completely positive if and only if 
$\mathsf C(\Phi):=({\rm id}\otimes\Phi)(|\Gamma\rangle\langle\Gamma|)$ is positive semi-definite, where $|\Gamma\rangle:=\sum_{j=1}^n|j\rangle\otimes|j\rangle$ is the (unnormalized) maximally entangled state.

We also need some objects from Lie theory: $\mathsf U(n)$ (resp.~$\mathsf{SU}(n)$) denotes the unitary (resp.~special unitary) group in $n$ dimensions and $\mathfrak u(n)$ (resp.~$\mathfrak{su}(n)$) is the corresponding Lie algebra.
In other words, $i\mathfrak u(n)$ ($i\mathfrak{su}(n)$) is the collection of all Hermitian (and traceless) $n\times n$-matrices.
Moreover, given some real or complex finite-dimensional vector space $\mathcal V$ as well as a closed subsemigroup with identity\footnote{
This means that $S$ is a closed subset of $\mathcal L(\mathcal V)$ such that ${\rm id}\in S$
and that for all $\Phi_1,\Phi_2\in S$ it holds that $\Phi_1\circ\Phi_2\in S$.
}
$S\subseteq\mathcal L(\mathcal V)$ one defines its \textit{Lie wedge}
$\mathsf L(S)$ \cite{HHL89} as the collection of all generators of dynamical semigroups in $S$, that is,
\begin{equation*}
\mathsf L(S):=\{A\in\mathcal L(\mathcal V): e^{tA}\in S\text{ for all }t\geq 0\}\,.
\end{equation*}
In this language, the celebrated result of Gorini, Kossakowski, Sudarshan \cite{GKS76}, and Lindblad \cite{Lindblad76} reads
\begin{equation}\label{eq:liewedge_cptp}
\mathsf L(\mathsf{CPTP}(n))=\big\{-i[H,\cdot]+\Phi-\{\tfrac{\Phi^*({\bf1})}{2},\cdot\}\,:\,H\in i\mathfrak u(n),\Phi\in\mathsf{CP}(n)\big\}\,,
\end{equation}
cf.~also \cite{DHKS08}; here $\Phi^*$ is the unique dual of $\Phi$, i.e.~${\rm tr}(B\Phi(A))={\rm tr}(\Phi^*(B)A)$ for all $A,B\in\mathbb C^{n\times n}$.
Moreover, 
the Lie wedge of the completely positive maps is known
to be given by
\cite[Thm.~3]{Lindblad76}
\begin{equation}\label{eq:liewedge_cp}
\mathsf L(\mathsf{CP}(n))=\big\{K(\cdot)+(\cdot)K^*+\Phi\,:\,K\in\mathbb C^{n\times n},\Phi\in\mathsf{CP}(n)\big\}\,.
\end{equation}

Now---like in the original GKS-paper \cite{GKS76}---we need to equip $\mathcal L(\mathbb C^{n\times n})$ with an inner product.
For this we use that every real or complex finite-dimensional vector space
$(\mathcal V,\langle\cdot,\cdot\rangle_{\mathcal V})$ induces an inner product on $\mathcal L(\mathcal V)$ via
\begin{align}\label{eq:innerprod_LV}
\begin{split}
\langle\cdot,\cdot\rangle: \mathcal L(\mathcal V)\times\mathcal L(\mathcal V)&\to\mathbb C\\
(\Phi,\Psi)&\mapsto \sum_\alpha\langle \Phi(v_\alpha),\Psi(v_\alpha)\rangle_{\mathcal V}\,,
\end{split}
\end{align}
where $\{v_\alpha\}_\alpha$ is an arbitrary orthonormal basis of $(\mathcal V,\langle\cdot,\cdot\rangle_{\mathcal V})$; unsurprisingly, 
the value of $\langle\Phi,\Psi\rangle$
is independent of the chosen orthonormal basis.
For example, choosing $\mathcal V=\mathbb C^n$ equipped with the standard inner product gives rise to the \textit{Hilbert-Schmidt} or \textit{Frobenius
inner product} $\langle\cdot,\cdot\rangle_{\sf HS}$ on $\mathbb C^{n\times n}$, that is, $\langle A,B\rangle_{\sf HS}={\rm tr}(A^*B)$ for all $A,B\in \mathbb C^{n\times n}$ \cite[Ch.~IV.2]{Bhatia97}.
Going one level higher, choosing $(\mathcal V,\langle\cdot,\cdot\rangle_{\mathcal V})=(\mathbb C^{n\times n},\langle\cdot,\cdot\rangle_{\sf HS})$ yields an inner product on $\mathcal L(\mathbb C^{n\times n})$ which acts like 
$\langle\Phi,\Psi\rangle=\sum_{j=1}^{n^2}\langle \Phi(G_j),\Psi(G_j)\rangle_{\sf HS}$ for any orthonormal basis
$\{G_j\}_{j=1}^{n^2}$ of $(\mathbb C^{n\times n},\langle\cdot,\cdot\rangle_{\sf HS})$.
In particular, this recovers the trace of linear maps on square matrices\footnote{
Recall that, given any real or complex finite-dimensional vector space $\mathcal V$, the trace is the unique \textit{commutative} linear functional on the linear maps $:\mathcal V\to\mathcal V$ such that 
${\rm tr}({\bf1}_{\mathcal V})=\dim{\mathcal V}$.
Therefore, evaluating ${\rm tr}(\Phi)$ via Eq.~\eqref{eq:trace_Phi} is unambiguous
as it is independent of the chosen orthonormal basis.
}
via
\begin{align}\label{eq:trace_Phi}
\begin{split}
{\rm tr}(\Phi)=\langle{\rm id},\Phi\rangle&=\sum_{j,k=1}^{n}\langle \,|g_j\rangle\langle g_k|\,,\Phi(|g_j\rangle\langle g_k|)\rangle_{\sf HS}\\
&=\sum_{j,k=1}^{n}{\rm tr}\big(  |g_k\rangle\langle g_j|\Phi(|g_j\rangle\langle g_k|) \big)=\sum_{j,k=1}^{n}\langle g_j|\Phi(|g_j\rangle\langle g_k|) |g_k\rangle
\end{split}
\end{align}
for all $\Phi\in\mathcal L(\mathbb C^{n\times n})$, where
$\{g_j\}_{j=1}^n$ is an arbitrary orthonormal basis of $\mathbb C^n$ (as then $\{|g_j\rangle\langle g_k|\}_{j,k=1}^n$ is an orthonormal basis of $(\mathbb C^{n\times n},\langle\cdot,\cdot\rangle_{\sf HS})$).

Below we will frequently deal with maps $\Phi_{A,B}\in\mathcal L(\mathbb C^{n\times n})$ of product form, i.e.~$\Phi_{A,B}(X):=AXB$ for some $A,B\in\mathbb C^{n\times n}$.
These maps admit some special properties, as is readily verified:
\begin{itemize}
\item The trace of $\Phi_{A,B}$ equals
\begin{equation}\label{eq:trace_AB}
{\rm tr}(\Phi_{A,B})\overset{\eqref{eq:trace_Phi}}=\sum_{j,k=1}^{n}\langle g_j|A|g_j\rangle\langle g_k|B |g_k\rangle={\rm tr}(A){\rm tr}(B)\,.
\end{equation}
\item The $\langle\cdot,\cdot\rangle_{\sf HS}$--adjoint\footnote{
Given any $\Phi\in\mathcal L(\mathbb C^{n\times n})$, $\Phi^\dagger$ is the unique linear map such that $\langle \Phi^\dagger(X),Y\rangle_{\sf HS}=\langle X,\Phi(Y)\rangle_{\sf HS}$ for all $X,Y\in\mathbb C^{n\times n}$.
Note that, one has $\Phi^\dagger=\Phi^*$ if (and only if) $\Phi$ is Hermitian-preserving. 
The adjoint can be used to derive a coordinate-free representation of the inner product:
$\langle\Phi,\Psi\rangle={\rm tr}(\Phi^\dagger\circ\Psi)$
\label{footnote_HS_adjoint}
}
of $\Phi_{A,B}$ is given by $\Phi_{A,B}^\dagger(X)=A^*XB^*$.
\end{itemize}

Now this formalism lets us relate properties of Kraus operators (resp.~their trace) to properties of the corresponding completely positive map:

\begin{lemma}\label{lemma_trace_findim}
Given any $B\in\mathbb C^{n\times n}$ and any $\Phi\in\mathsf{CP}(n)$ the following statements are equivalent.
\begin{itemize}
\item[(i)] ${\rm tr}(\Phi\circ\Phi_{B^*,B})=0$. In abuse of notation we will---here and henceforth---write ${\rm tr}(\Phi\circ\Phi_{B^*,B})={\rm tr}(\Phi(B^*(\cdot)B))$.
\item[(ii)] $\langle B(\cdot)B^*,\Phi\rangle=0$
\item[(iii)] For all $\{V_j\}_{j=1}^m$, $m\in\mathbb N$ which satisfy $\Phi\equiv\sum_{j=1}^mV_j(\cdot)V_j^*$ it holds that ${\rm tr}(B^* V_j)=0$ for all $j=1,\ldots,m$.
\item[(iv)] There exist $\{V_j\}_{j=1}^m$, $m\in\mathbb N$ such that $\Phi\equiv\sum_{j=1}^mV_j(\cdot)V_j^*$ as well as ${\rm tr}(B^* V_j)=0$ for all $j=1,\ldots,m$.
\end{itemize}
\end{lemma}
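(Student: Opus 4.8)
The plan is to collapse conditions (i) and (ii) into a single scalar equation via the coordinate-free representation of the inner product, and then to reduce both to an explicit sum-of-squares expression in the Kraus operators, from which (iii) and (iv) drop out by a positivity argument.

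First I would note that (i) and (ii) are literally the same scalar equation. Writing $B(\cdot)B^*=\Phi_{B,B^*}$ and combining the coordinate-free formula $\langle\Phi,\Psi\rangle={\rm tr}(\Phi^\dagger\circ\Psi)$ with the adjoint rule $\Phi_{B,B^*}^\dagger=\Phi_{B^*,B}$ recorded just above the lemma, one finds
\[
\langle B(\cdot)B^*,\Phi\rangle={\rm tr}(\Phi_{B^*,B}\circ\Phi)={\rm tr}(\Phi\circ\Phi_{B^*,B}),
\]
the last equality being commutativity of the trace of linear maps. Hence the equivalence (i) $\Leftrightarrow$ (ii) requires no further work.

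The core of the proof is a single computation for maps of product form. For arbitrary $A_1,B_1,A_2,B_2\in\mathbb C^{n\times n}$ I would use that the composition of two product maps is again of product form, namely $\Phi_{A_1,B_1}^\dagger\circ\Phi_{A_2,B_2}=\Phi_{A_1^*A_2,\,B_2B_1^*}$, and then apply \eqref{eq:trace_AB} to obtain
\[
\langle\Phi_{A_1,B_1},\Phi_{A_2,B_2}\rangle={\rm tr}(A_1^*A_2)\,{\rm tr}(B_2B_1^*).
\]
Now fix any Kraus representation $\Phi=\sum_{j=1}^m\Phi_{V_j,V_j^*}$ and specialize to $A_1=B$, $B_1=B^*$, $A_2=V_j$, $B_2=V_j^*$. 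Since ${\rm tr}(V_j^*B)=\overline{{\rm tr}(B^*V_j)}$, each summand equals ${\rm tr}(B^*V_j)\,\overline{{\rm tr}(B^*V_j)}=|{\rm tr}(B^*V_j)|^2$, so linearity in the second slot gives
\[
\langle B(\cdot)B^*,\Phi\rangle=\sum_{j=1}^m|{\rm tr}(B^*V_j)|^2.
\]

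This identity finishes everything. Its left-hand side is the (decomposition-independent) number appearing in (ii), while the right-hand side is a sum of non-negative reals attached to one particular choice of Kraus operators. Thus, for any fixed representation, the quantity in (ii) vanishes precisely when every ${\rm tr}(B^*V_j)$ vanishes. Because the left-hand side does not depend on the representation, (ii) forces ${\rm tr}(B^*V_j)=0$ for \emph{every} representation, which is (iii); the step (iii) $\Rightarrow$ (iv) is immediate since some representation exists as $\Phi\in\mathsf{CP}(n)$; and (iv) $\Rightarrow$ (ii) holds because a single representation with all traces zero already makes the sum, hence the left-hand side, vanish. I do not anticipate a genuine obstacle here: the only delicate point is recognizing the sum-of-squares (i.e.\ positivity) structure of $\langle B(\cdot)B^*,\Phi\rangle$, as this is exactly what makes the universally quantified condition (iii) and the existentially quantified condition (iv) coincide; everything else is routine manipulation of product maps through \eqref{eq:trace_AB}.
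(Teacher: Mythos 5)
Your proposal is correct and follows essentially the same route as the paper: establish (i) $\Leftrightarrow$ (ii) via the adjoint/cyclicity of the trace, and reduce everything else to the identity ${\rm tr}(\Phi(B^*(\cdot)B))=\sum_j|{\rm tr}(B^*V_j)|^2$ obtained from Eq.~\eqref{eq:trace_AB}, with the sum-of-squares structure handling the universal/existential quantifiers in (iii) and (iv). The only cosmetic difference is that you compute the inner product of two product maps in general before specializing, whereas the paper applies \eqref{eq:trace_AB} directly to ${\rm tr}(V_jB^*(\cdot)BV_j^*)$.
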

\begin{proof}
``(i) $\Leftrightarrow$ (ii)'':
Using footnote~\ref{footnote_HS_adjoint} as well as cyclicity of the trace we compute
 $\langle B(\cdot)B^*,\Phi\rangle={\rm tr}((B(\cdot)B^*)^\dagger\circ\Phi)={\rm tr}((B^*(\cdot)B)\circ\Phi)={\rm tr}(\Phi(B^*(\cdot)B))$.

``(iii) $\Rightarrow$ (iv)'': Every completely positive map admits Kraus operators \cite[Thm.~1]{Choi75}.
 ``(iv) $\Rightarrow$ (i) $\Rightarrow$ (iii)'': Follows at once from the fact that for all $\Phi\in\mathsf{CP}(n)$ and any set of Kraus operators $\{V_j\}_{j=1}^m\subset\mathbb C^{n\times n}$, $m\in\mathbb N$ of $\Phi$ (i.e.~$\Phi=\sum_{j=1}^m V_j(\cdot)V_j^*$) it holds that
\begin{align*}
{\rm tr}(\Phi(B^*(\cdot)B))=\sum_{j=1}^m{\rm tr}(V_jB^*(\cdot)BV_j^*)\overset{\eqref{eq:trace_AB}}=\sum_{j=1}^m|{\rm tr}(B^*V_j)|^2\,.\tag*{\qedhere}
\end{align*}
\end{proof}
\noindent In other words ${\rm tr}(\Phi(B^*(\cdot)B)=0$ is equivalent to all Kraus operators of $\Phi$ being orthogonal to $B$ (w.r.t.~$\langle\cdot,\cdot\rangle_{\sf HS}$).\medskip

Let us draw some connections of this lemma to different notions from the literature:
\begin{remark}\label{rem_entfid}
${}$ 
\begin{itemize}
\item[(i)] Given $\Phi\in\mathsf{CPTP}(n)$ and a quantum state $\rho$, Schumacher \cite{Schumacher96} first introduced the entanglement fidelity of $\Phi$ w.r.t.~$\rho$, denoted\footnote{
While not important to our work, $F_e(\Phi,\rho)$ is defined as the fidelity of $({\rm id}\otimes\Phi)(|\psi_\rho\rangle\langle\psi_\rho|)$ and $|\psi_\rho\rangle\langle\psi_\rho|$ (i.e.~$F_e(\Phi,\rho):=\langle\psi_\rho|({\rm id}\otimes\Phi)(|\psi_\rho\rangle\langle\psi_\rho|)|\psi_\rho\rangle$) where $\psi_\rho$ is any purification of $\rho$.
}
$F_e(\Phi,\rho)$.
For any Kraus representation $\sum_jV_j(\cdot)V_j^*$ of $\Phi$ this quantity is equal to
$\sum_j|{\rm tr}(V_j\rho)|^2$, cf.~\cite[Lemma~10.10]{Holevo12}.
Thus the proof of Lemma~\ref{lemma_trace_findim} shows $F_e(\Phi,\rho)={\rm tr}(\Phi(\rho(\cdot)\rho))$, even for general $\Phi\in\mathsf{CP}(n)$.
\item[(ii)] 
One of the equivalent conditions for uniquely decomposing generators of dynamical semigroups given by Davies \cite[Thm.~1]{Davies80unique} is that $\int_{\mathsf U(n)\times\mathsf U(n)}U^*\Phi(UV)V^*\,dU\,dV=0$.
He proved this by observing that this integral evaluates to $(\sum_{j=1}^m|{\rm tr}(V_j)|^2)\frac{\bf1}{n}$.
Thus, by the proof of Lemma~\ref{lemma_trace_findim} $\int_{\mathsf U(n)\times\mathsf U(n)}U^*\Phi(UV)V^*\,dU\,dV={\rm tr}(\Phi)\frac{\bf1}{n}$ for all $\Phi\in\mathsf{CP}(n)$; in
other words $\Phi\mapsto \int_{\mathsf U(n)\times\mathsf U(n)}U^*\Phi(UV)V^*\,dU\,dV$ is a channel analogue of twirling \cite[Ex.~2.24]{Heinosaari12}.
\end{itemize}
\end{remark} 

The above inner product $\langle\cdot,\cdot\rangle$ on $\mathcal L(\mathbb C^{n\times n})$ is of course not unique in any way.
As the matrix $B$ from Lemma~\ref{lemma_trace_findim} will play the role of a reference state,
we will need an inner product (with certain properties) which is somehow induced by this $B$.
There are many ways to accomplish this, but the construction we follow starts from a $B$-inner product on $\mathbb C^{n\times n}$ which can be turned into a $B$-inner product on the linear maps on $\mathbb C^{n\times n}$.
A common choice for the former is $(X,Y)\mapsto {\rm tr}(X^*\sqrt{\omega }Y\sqrt \omega )$ from quantum thermodynamics,
where $\omega \in\mathbb C^{n\times n}$ is an arbitrary positive definite matrix which plays the role of the Gibbs state of a system.
Thus, given any $B\in\mathbb C^{n\times n}$ positive definite, the inner product $\langle X,Y\rangle_B:={\rm tr}(X^*BYB)$
gives rise to (in abuse of notation)
\begin{equation}\label{eq:innerprod_L_B}
\langle \Phi,\Psi\rangle_B:=\sum_\alpha\langle \Phi(G_\alpha),\Psi(G_\alpha)\rangle_B=\sum_\alpha{\rm tr}\big( \Phi(G_\alpha)^*B\Psi(G_\alpha)B\big)
\end{equation}
for all $\Phi,\Psi\in \mathcal L(\mathbb C^{n\times n})$
and any orthonormal basis $\{G_\alpha\}_\alpha$ of $(\mathbb C^{n\times n},\langle\cdot,\cdot\rangle_{\sf HS})$.
A straightforward computation shows that $\langle \cdot,\cdot\rangle_B$ is related to $\langle \cdot,\cdot\rangle$ via
\begin{align*}
\langle \Phi,\Psi\rangle_B= {\rm tr}(\Phi^\dagger\circ ( B(\cdot)B )\circ\Psi) &=\langle \sqrt B\Phi(\cdot)\sqrt B,\sqrt B\Psi(\cdot)\sqrt B\rangle\\
&= \langle \Phi_{\sqrt B,\sqrt B} \circ\Phi, \Phi_{\sqrt B,\sqrt B} \circ\Psi\rangle \,.
\end{align*}
As a consequence, Lemma~\ref{lemma_trace_findim} (ii) is equivalent to $\langle{\rm id},\Phi\rangle_B=0$.

Now for multiplication maps this $B$-inner product evaluates to
\begin{align}
\langle X_1(\cdot)Y_1,X_2(\cdot)Y_2\rangle_B&
=\sum_\alpha{\rm tr}( Y_1^*G_\alpha^*X_1^*BX_2G_\alpha Y_2B )\notag\\
&
=\sum_\alpha{\rm tr}( G_\alpha^*X_1^*BX_2G_\alpha Y_2BY_1^* )\notag\\
&={\rm tr}\big((X_1^* B X_2)(\cdot)(Y_2BY_1^*)\big)  \overset{\eqref{eq:trace_Phi}}={\rm tr}(X_1^* B X_2){\rm tr}(Y_2BY_1^*)\,,\label{eq:innerprod_XYB}
\end{align}
where $X_1,X_2,Y_1,Y_2\in\mathbb C^{n\times n}$ were arbitrary.
\section{Main Results}\label{sec_mainres}

One of the central objects of the GKS-paper \cite{GKS76}---completely positive maps with traceless Kraus (resp.~Lindblad) operators---by Lemma~\ref{lemma_trace_findim}
are precisely those $\Phi\in\mathsf{CP}(n)$ of zero trace.
Because this concept---as well as its generalization to arbitrary ``weight matrices'' $B\in\mathbb C^{n\times n}$---is central to this work let us turn it into a proper definition:

\begin{defi}\label{def_1}
Given any $B\in\mathbb C^{n\times n}$ define
$
\mathsf{CP}_B(n)$ as the collection of all $\Phi\in \mathsf{CP}(n)$ which satisfy ${\rm tr}(\Phi(B^*(\cdot)B))=0$.
\end{defi}
\noindent Thus, by Lemma~\ref{lemma_trace_findim}, $\mathsf{CP}_{\bf1}(n)$ is the set of all completely positive maps with traceless Kraus operators.\medskip

At this point we emphasize that
(i) $\mathsf{CP}_B(n)$ is non-empty for all $B$ (as a simple consequence of Lemma~\ref{lemma_trace_findim}~(iv)),
and that (ii)
not every completely positive map is in $\mathsf{CP}_B(n)$ for some (non-zero) $B\in\mathbb C^{n\times n}$, i.e.
$$
\bigcup_{\{B\in\mathbb C^{n\times n}:B\neq 0\}}\mathsf{CP}_B(n)\subsetneq\mathsf{CP}(n)\,.
$$
The simplest example which shows that this inclusion is strict is the following:
\begin{example}\label{ex_depol}
Consider the completely depolarizing channel $D(X):={\rm tr}(X)\frac{\bf1}n$
which has Kraus operators $\{n^{-1/2}|j\rangle\langle k|\}_{j,k=1}^n$, cf.~\cite[Ch.~2.2.3]{Watrous18}.
Now assume that $D\in \mathsf{CP}_B(n)$ for some $B\in\mathbb C^{n\times n}$, $B\neq 0$.
Then Lemma~\ref{lemma_trace_findim} shows $\langle j|B|k\rangle=0$
for all $j,k=1,\ldots,n$,
i.e.~$B=0$, 
a contradiction.
\end{example}

\subsection{Unique Decompositions of Generators of Completely Positive Dynamical Semigroups}\label{sec_mainres_unique}

In order to further explore the sets $
\mathsf{CP}_B(n)$ and their properties we need to invoke the concept of \textit{vectorization}, cf.~\cite[Ch.~4.2 ff.]{HJ2}.
Given any $B\in\mathbb C^{m\times n}$ one defines 
$\vec B:=\sum_{j=1}^n\sum_{k=1}^m B_{kj}|j\otimes k\rangle\in\mathbb C^n\otimes\mathbb C^m\simeq\mathbb C^{mn}$ (where $B_{kj}:=\langle k|B|j\rangle$).
Thus, $\vec B$ is the vector resulting from stacking all columns of $B$ underneath each other, which can be seen by means of the identity $\vec B=\sum_{j=1}^n|j\rangle\otimes B|j\rangle=({\bf 1}\otimes B)|\Gamma\rangle$.
%
While this insight allows for a more elegant proof of the following key lemma---which relates the trace of maps of the form $\Phi(X(\cdot)Y)$ to the Choi matrix of $\Phi$---we prefer to present an elementary, yet not much longer proof.

\begin{lemma}\label{lemma0}
For all $\Phi\in\mathcal L(\mathbb C^{n\times n})$, $n\in\mathbb N$ and all $X,Y\in\mathbb C^{n\times n}$
\begin{equation*}
{\rm tr}\big(\Phi(X(\cdot)Y)\big)=\langle\vec{X^*}|\mathsf C(\Phi)|\vec Y\rangle\,.
\end{equation*}
\end{lemma}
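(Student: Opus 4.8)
The plan is to evaluate both sides in the standard basis $\{|j\rangle\}_{j=1}^n$ of $\mathbb C^n$ and check that the two resulting coordinate expressions coincide. For the left-hand side I would start from the explicit trace formula~\eqref{eq:trace_Phi}: applying it to the composed map $Z\mapsto\Phi(XZY)$ gives
\[
{\rm tr}\big(\Phi(X(\cdot)Y)\big)=\sum_{j,k=1}^n\langle j|\Phi\big(X|j\rangle\langle k|Y\big)|k\rangle\,,
\]
and expanding $X|j\rangle=\sum_a\langle a|X|j\rangle\,|a\rangle$ and $\langle k|Y=\sum_b\langle k|Y|b\rangle\,\langle b|$ together with linearity of $\Phi$ turns this into the four-index coordinate sum $\sum_{j,k,a,b}\langle a|X|j\rangle\langle k|Y|b\rangle\,\langle j|\Phi(|a\rangle\langle b|)|k\rangle$.

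For the right-hand side I would exploit the identity $\vec B=({\bf1}\otimes B)|\Gamma\rangle$ recorded just above the statement. Since $(X^*)^*=X$, this yields $\langle\vec{X^*}|=\langle\Gamma|({\bf1}\otimes X)$ and $|\vec Y\rangle=({\bf1}\otimes Y)|\Gamma\rangle$, so that
\[
\langle\vec{X^*}|\mathsf C(\Phi)|\vec Y\rangle=\langle\Gamma|({\bf1}\otimes X)\,\mathsf C(\Phi)\,({\bf1}\otimes Y)|\Gamma\rangle\,.
\]
Now I would substitute $\mathsf C(\Phi)=\sum_{a,b}|a\rangle\langle b|\otimes\Phi(|a\rangle\langle b|)$ and $|\Gamma\rangle=\sum_\ell|\ell\rangle\otimes|\ell\rangle$; the bra--ket contractions on the first tensor leg collapse two of the sums (forcing the summation index on each leg of $|\Gamma\rangle$ to match the ``row'' and ``column'' label of $|a\rangle\langle b|$), leaving $\sum_{a,b}\langle a|X\,\Phi(|a\rangle\langle b|)\,Y|b\rangle$. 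Expanding $\langle a|X$ and $Y|b\rangle$ in the standard basis then produces exactly the same four-index coordinate sum as obtained for the left-hand side, after a trivial relabeling of dummy indices.

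The only real obstacle is bookkeeping: one must track four summation indices and, crucially, the fact that $X$ and $Y$ sit \emph{inside} $\Phi$ on the left but are pulled \emph{outside} $\Phi$ on the right via the Choi/maximally-entangled-state structure. The matching is purely a matter of relabeling, but one has to make sure the Hermitian conjugates land correctly---in particular that passing from $\vec{X^*}$ to a bra reinstates $X$ (not $X^*$) through $\langle\vec{X^*}|=\langle\Gamma|({\bf1}\otimes X)$. A slicker, fully coordinate-free variant avoids even this bookkeeping by using the ``ricochet'' identity $({\bf1}\otimes M)|\Gamma\rangle=(M^{T}\otimes{\bf1})|\Gamma\rangle$ to shuttle $X$ and $Y$ across $|\Gamma\rangle$; but since the index computation is short, I would present the elementary version.
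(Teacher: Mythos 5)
Your argument is correct: both sides reduce to the same four-index sum $\sum_{a,b,j,k}X_{aj}Y_{kb}\,\langle j|\Phi(|a\rangle\langle b|)|k\rangle$, the adjoints are handled properly (in particular $\langle\vec{X^*}|=\langle\Gamma|({\bf1}\otimes X)$ is right, since $(X^*)^*=X$), and the contraction of $\mathsf C(\Phi)$ against $({\bf1}\otimes Y)|\Gamma\rangle$ and $\langle\Gamma|({\bf1}\otimes X)$ is carried out correctly. The strategy is the same as the paper's---evaluate both sides in the standard basis and match coordinates---but your treatment of the right-hand side is precisely the ``more elegant proof'' that the paper alludes to immediately before the lemma and then deliberately avoids: you first invoke $\vec B=({\bf1}\otimes B)|\Gamma\rangle$ to pull $X$ and $Y$ out of the vectorized kets as operators on the second tensor factor, and only then expand in indices, whereas the paper expands $\langle\vec{X^*}|\mathsf C(\Phi)|\vec Y\rangle$ directly as a six-index sum and collapses it. Likewise, on the left-hand side you apply the trace formula~\eqref{eq:trace_Phi} to the composed map $Z\mapsto\Phi(XZY)$ in one step, while the paper first establishes ${\rm tr}(\Phi(|a\rangle\langle b|(\cdot)|j\rangle\langle k|))=\langle b|\Phi(|a\rangle\langle k|)|j\rangle$ and then assembles $X$ and $Y$ from rank-one pieces by linearity. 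Your route buys slightly less index bookkeeping at the cost of presupposing the vectorization identity; the paper's version is marginally longer but self-contained at the level of matrix entries. Either is a complete proof.
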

\begin{proof}
We compute
\begin{align*}
\langle\vec{X^*}|&\mathsf C(\Phi)|\vec Y\rangle\\
&=\sum_{a,b,j,k,p,q=1}^m\big\langle  (X^*)_{ba}\,a\otimes b\,\big| \,|q\rangle\langle k|\otimes\Phi(|q\rangle\langle k|)\,   \big| \, Y_{jp}\,p\otimes j  \big\rangle\\
&=\sum_{a,b,j,k,p,q=1}^m(X_{ab}^*)^*Y_{jp}\langle a|q\rangle\langle k|p\rangle\langle  b|\Phi(|q\rangle\langle k|)|j\rangle\\
&=\sum_{a,b,j,k=1}^n X_{ab}Y_{jk}\langle b|\Phi(|a\rangle\langle k|)|j\rangle\,.
\end{align*}
Now note that
\begin{align*}
{\rm tr}\big( \Phi\big( |a\rangle\langle b|(\cdot)|j\rangle\langle k|   \big) \big)&\overset{\eqref{eq:trace_Phi}}=\sum_{p,q=1}^n \big\langle p\big| \Phi\big( |a\rangle\langle b|p\rangle\langle q|j\rangle\langle k|   \big)\big|q\big\rangle\\
&=\sum_{p,q=1}^n\delta_{bp}\delta_{qj}\langle p|\Phi(|a\rangle\langle k|)|q\rangle=\langle b|\Phi(|a\rangle\langle k|)|j\rangle
\end{align*}
so combining these two identities yields
\begin{align*}
\langle\vec{X^*}|\mathsf C(\Phi)|\vec Y\rangle&=\sum_{a,b,j,k=1}^n X_{ab}Y_{jk}{\rm tr}\big( \Phi\big( |a\rangle\langle b|(\cdot)|j\rangle\langle k|   \big) \big)\\
&={\rm tr}\Big(\Phi\Big(  \sum_{a,b=1}^nX_{ab}|a\rangle\langle b|(\cdot)\sum_{j,k=1}^nY_{jk}|j\rangle\langle k|  \Big)\Big)
\end{align*}
which is equal to ${\rm tr}(\Phi(X(\cdot)Y))$, as claimed.
\end{proof}

The importance of this identity is due to the following insight: If $\Phi$ is completely positive and ${\rm tr}(\Phi(B^*(\cdot)B))=0$, then the Choi matrix of $\Phi$ has $\vec B$ is in the kernel:

\begin{lemma}\label{lemma1}
For all $B\in\mathbb C^{n\times n}$ and all $\Phi\in\mathsf{CP}_{B}(n)$ one has $\mathsf C(\Phi)|\vec B\rangle=0$.
\end{lemma}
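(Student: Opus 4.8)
The plan is to reduce the claim to the key identity of Lemma~\ref{lemma0} together with the defining positivity of the Choi matrix. First I would specialize Lemma~\ref{lemma0} to the arguments $X=B^*$ and $Y=B$. Since then $X^*=(B^*)^*=B$, the lemma yields
\begin{equation*}
{\rm tr}\big(\Phi(B^*(\cdot)B)\big)=\langle\vec{B}|\mathsf C(\Phi)|\vec B\rangle\,.
\end{equation*}
By hypothesis $\Phi\in\mathsf{CP}_B(n)$, so the left-hand side vanishes by Definition~\ref{def_1}; hence $\langle\vec B|\mathsf C(\Phi)|\vec B\rangle=0$. The only point to watch here is the bookkeeping of the two slots in Lemma~\ref{lemma0}: one must feed in $X=B^*$ (not $B$) precisely so that the adjoint appearing in $\vec{X^*}$ turns it back into $\vec B$, matching the ket $\vec Y=\vec B$ on the right.

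The second and final step is to upgrade the vanishing of the quadratic form to the vanishing of $\mathsf C(\Phi)|\vec B\rangle$ itself. Because $\Phi$ is completely positive, Choi's theorem \cite{Choi75} guarantees that $\mathsf C(\Phi)$ is positive semi-definite, so I may write $\mathsf C(\Phi)=\sqrt{\mathsf C(\Phi)}\,\sqrt{\mathsf C(\Phi)}$ with $\sqrt{\mathsf C(\Phi)}$ its positive semi-definite square root. Then
\begin{equation*}
0=\langle\vec B|\mathsf C(\Phi)|\vec B\rangle=\big\|\sqrt{\mathsf C(\Phi)}\,|\vec B\rangle\big\|^2\,,
\end{equation*}
which forces $\sqrt{\mathsf C(\Phi)}\,|\vec B\rangle=0$ and therefore $\mathsf C(\Phi)|\vec B\rangle=\sqrt{\mathsf C(\Phi)}\,\big(\sqrt{\mathsf C(\Phi)}\,|\vec B\rangle\big)=0$, as claimed.

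There is no genuine obstacle in this argument: it is essentially the concatenation of Lemma~\ref{lemma0} (the analytic content, already proven) with the elementary linear-algebra fact that a positive semi-definite matrix annihilating a vector in the quadratic form annihilates it outright. Complete positivity enters in exactly one place, namely to invoke Choi's characterization so that $\mathsf C(\Phi)\geq 0$; without it the step from $\langle\vec B|\mathsf C(\Phi)|\vec B\rangle=0$ to $\mathsf C(\Phi)|\vec B\rangle=0$ would fail, which already anticipates the later discussion (Sec.~\ref{sec_dec_pos}) of why positivity alone is not enough.
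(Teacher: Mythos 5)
Your proof is correct and follows essentially the same route as the paper: apply Lemma~\ref{lemma0} with $X=B^*$, $Y=B$ to identify ${\rm tr}(\Phi(B^*(\cdot)B))$ with $\langle\vec B|\mathsf C(\Phi)|\vec B\rangle$, then use positive semi-definiteness of the Choi matrix and its square root to pass from the vanishing quadratic form to $\mathsf C(\Phi)|\vec B\rangle=0$. The careful slot-bookkeeping you point out is exactly the step the paper relies on implicitly, so nothing is missing.
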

\begin{proof}
First, $\Phi\in\mathsf{CP}(n)$ implies $\mathsf C(\Phi)\geq 0$.
Together with ${\rm tr}\,\Phi(B^*(\cdot)B)=0$---which by Lemma~\ref{lemma0}
is equivalent to $\langle\vec B|\mathsf C(\Phi)|\vec B\rangle=0$---this shows
\begin{align*}
0&=\langle\vec B|\mathsf C(\Phi)|\vec B\rangle\\
&=\big\langle\sqrt{\mathsf C(\Phi)}\vec B\big|\sqrt{\mathsf C(\Phi)}\vec B\big\rangle=\big\|\sqrt{\mathsf C(\Phi)}\vec B\big\|^2\,.
\end{align*}
Therefore $0=\sqrt{\mathsf C(\Phi)}0=\sqrt{\mathsf C(\Phi)}(\sqrt{\mathsf C(\Phi)}\vec B)=\mathsf C(\Phi)\vec B$ which concludes the proof.
\end{proof}

\noindent This lemma also explains ``why'' Example~\ref{ex_depol} works as it does:
Any element of $\mathsf{CP}_{B}(n)$ necessarily has Choi rank (equivalently: Kraus rank, cf.~\cite[Ch.~2.2.2]{Watrous18}) strictly less than $n^2$---regardless of the chosen $B$---but the Choi matrix of the completely depolarizing channel is a multiple of the identity, hence its Kraus rank is $n^2$.\medskip

Now, the kernel property of Lemma~\ref{lemma1} is the reason why maps of the form $K(\cdot)+(\cdot)K^*$---sometimes called \textit{generalized inner ${}^*$-derivation} \cite{FM92}---are ``incompatible'' with $\mathsf{CP}_{B}(n)$ in the following sense:
\begin{proposition}\label{prop1}
Let $B,K\in\mathbb C^{n\times n}$ with ${\rm tr}(B)\neq 0$ be given. The following statements are equivalent.
\begin{itemize}
\item[(i)] $K(\cdot)+(\cdot)K^*\in\mathsf{CP}_{B}(n)-\mathsf{CP}_{B}(n)$
\item[(ii)] There exists $\lambda\in\mathbb R$ such that $K=i\lambda{\bf1}$.
\end{itemize}
\end{proposition}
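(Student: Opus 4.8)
The plan is to dispatch the trivial implication (ii) $\Rightarrow$ (i) first and then concentrate on (i) $\Rightarrow$ (ii), where the actual work lies. If $K=i\lambda{\bf1}$ with $\lambda\in\mathbb R$, then $K^*=-i\lambda{\bf1}$ and hence $K(\cdot)+(\cdot)K^*=i\lambda(\cdot)-i\lambda(\cdot)=0$; since $\mathsf{CP}_B(n)\neq\emptyset$ by Lemma~\ref{lemma_trace_findim}~(iv), we may write $0=\Phi-\Phi$ for any $\Phi\in\mathsf{CP}_B(n)$, so (i) holds.

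For the converse, abbreviate $\Psi:=K(\cdot)+(\cdot)K^*$ and suppose $\Psi=\Phi_1-\Phi_2$ with $\Phi_1,\Phi_2\in\mathsf{CP}_B(n)$. The first step is to pass to Choi matrices. Since $\Phi\mapsto\mathsf C(\Phi)$ is linear, $\mathsf C(\Psi)=\mathsf C(\Phi_1)-\mathsf C(\Phi_2)$, and Lemma~\ref{lemma1} gives $\mathsf C(\Phi_1)|\vec{B}\rangle=\mathsf C(\Phi_2)|\vec{B}\rangle=0$, so that $\mathsf C(\Psi)|\vec{B}\rangle=0$. I would then convert this kernel condition back into a statement about traces via Lemma~\ref{lemma0}: evaluating its identity at $Y=B$ yields ${\rm tr}(\Psi(X(\cdot)B))=\langle\vec{X^*}|\mathsf C(\Psi)|\vec{B}\rangle$, and as $X$ ranges over $\mathbb C^{n\times n}$ the vectors $\vec{X^*}$ range over all of $\mathbb C^{n^2}$. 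Hence $\mathsf C(\Psi)|\vec{B}\rangle=0$ is equivalent to ${\rm tr}(\Psi(X(\cdot)B))=0$ for every $X$.

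The second step is an explicit computation of this trace. Writing $\Psi(X(\cdot)B)=(KX)(\cdot)B+X(\cdot)(BK^*)$ as a sum of two product-form maps and applying Eq.~\eqref{eq:trace_AB} gives ${\rm tr}(\Psi(X(\cdot)B))={\rm tr}(KX){\rm tr}(B)+{\rm tr}(X){\rm tr}(BK^*)$. Rewriting ${\rm tr}(X)={\rm tr}({\bf1}X)$, the vanishing-for-all-$X$ condition becomes ${\rm tr}\big(({\rm tr}(B)\,K+{\rm tr}(BK^*)\,{\bf1})X\big)=0$ for all $X$, and nondegeneracy of the trace bilinear form $(M,X)\mapsto{\rm tr}(MX)$ forces ${\rm tr}(B)\,K+{\rm tr}(BK^*)\,{\bf1}=0$. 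Since ${\rm tr}(B)\neq0$, this shows $K=c\,{\bf1}$ with $c:=-{\rm tr}(BK^*)/{\rm tr}(B)$. Finally, substituting $K=c\,{\bf1}$ back in gives ${\rm tr}(BK^*)=\bar c\,{\rm tr}(B)$, whence $c=-\bar c$; that is, $c=i\lambda$ for some $\lambda\in\mathbb R$, which is exactly (ii).

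The decisive step is the reduction in the second paragraph: combining the kernel property of Lemma~\ref{lemma1} with the trace--Choi identity of Lemma~\ref{lemma0} is what converts the ``difference of completely positive maps'' hypothesis---which a priori carries positivity information that is awkward to exploit directly---into the clean linear constraint ${\rm tr}(\Psi(X(\cdot)B))=0$. Once this is in place the remainder is elementary linear algebra; the only mild subtlety is the closing observation that self-consistency of the scalar $c$ forces it to be purely imaginary, which is precisely where the adjoint in $(\cdot)K^*$ (rather than a plain $K$) enters.
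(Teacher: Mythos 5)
Your proof is correct and follows essentially the same route as the paper: both reduce (i) via Lemma~\ref{lemma1} and linearity of the Choi map to the kernel condition $\mathsf C(K(\cdot)+(\cdot)K^*)|\vec{B}\rangle=0$, extract from the resulting linear constraint that $K=c{\bf 1}$, and then conclude $c\in i\mathbb R$. The only differences are organizational and harmless: you phrase the kernel condition as ${\rm tr}\big(\Psi(X(\cdot)B)\big)=0$ for all $X$ using Lemma~\ref{lemma0} and nondegeneracy of the trace pairing, whereas the paper evaluates the matrix elements $\langle k\otimes j|\mathsf C(\Psi)|\vec{B}\rangle$ directly (which is your argument specialized to $X=|k\rangle\langle j|$), and you obtain ${\rm Re}(c)=0$ from self-consistency of $c=-{\rm tr}(BK^*)/{\rm tr}(B)$ rather than from the paper's additional computation $0=2\,{\rm Re}\big({\rm tr}(B)\,{\rm tr}(KB^*)\big)$.
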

\begin{proof}
``(ii) $\Rightarrow$ (i)'': Assume $K=i\lambda{\bf1}$ for some $\lambda\in\mathbb R$. Then one computes $K(\cdot)+(\cdot)K^*=0=0-0\in \mathsf{CP}_{B}(n)-\mathsf{CP}_{B}(n)\,.$
``(i) $\Rightarrow$ (ii)'': Assume there exist $\Phi_1,\Phi_2\in\mathsf{CP}_B(n)$ such that $K(\cdot)+(\cdot)K^*=\Phi_1-\Phi_2$.
By Lemma~\ref{lemma1} $\mathsf C(\Phi_1)|\vec B\rangle=\mathsf C(\Phi_2)|\vec B\rangle=0$, hence $\mathsf C(K(\cdot)+(\cdot)K^*)|\vec B\rangle=0$, as well.
Then
one for all $j,k=1,\ldots,n$ finds
\begin{align*}
0&=\langle k\otimes j|\mathsf C(K(\cdot)+(\cdot)K^*)| \vec B\rangle\\
&=\sum_{a,b,p,q=1}^nB_{qp}\big\langle k\otimes j\big|\,|{  a}\rangle\langle {  b}|\otimes \big(K| a\rangle\langle b|+| a\rangle\langle b|K^*\big)\,\big|{  p}\otimes q\big\rangle\\
&=\sum_{a,b,p,q=1}^nB_{qp}\langle k|a\rangle\langle b|p\rangle \langle 
j|K| a\rangle\langle b| q \rangle+\sum_{a,b,p,q=1}^nB_{qp}\langle k|a\rangle\langle b|p\rangle \langle j | a\rangle\langle b|K^*|q \rangle\\
&=\sum_{p=1}^nB_{pp}\langle 
j|K| k\rangle+\delta_{jk}\sum_{p,q=1}^nB_{qp}\langle p|K^*|q \rangle={\rm tr}(B)\langle j|K| k\rangle+\delta_{jk}{\rm tr}(B K^*)\,.
\end{align*}
If $j\neq k$, this shows ${\rm tr}(B)\langle j|K|k\rangle=0$; but ${\rm tr}(B)\neq 0$ by assumption, hence $\langle j|K|k\rangle=0$ 
and thus all off-diagonal elements of $K$ vanish.
On the other hand, setting $k=j$ shows
$
\langle j|K| j\rangle=-\frac{{\rm tr}(B K^*)}{{\rm tr}(B)}
$
for all $j$ so, altogether, $K=c{\bf 1}$ for some $c\in\mathbb C$.
For the final step note that
\begin{align*}
0=0-0&\overset{\hphantom{\eqref{eq:trace_AB}}}= {\rm tr}(\Phi_1(B^*(\cdot)B))-{\rm tr}(\Phi_2(B^*(\cdot)B)) \\
&\overset{\hphantom{\eqref{eq:trace_AB}}}={\rm tr}\big( KB^*(\cdot)B+B^*(\cdot)B K^* \big)\\
&\overset{\eqref{eq:trace_AB}}={\rm tr}(KB^*){\rm tr}(B)+{\rm tr}(B^*){\rm tr}(B K^*)=2\,{\rm Re}\big({\rm tr}(B){\rm tr}(KB^*)\big)\,.
\end{align*}
Inserting $K=c{\bf 1}$ then shows $0=2\,{\rm Re}(c)|{\rm tr}(B)|^2$ so---because ${\rm tr}(B)\neq 0$ by assumption---there exists $\lambda\in\mathbb R$ such that $c=i\lambda$; hence (ii) holds.
\end{proof}

\noindent This in some sense generalizes the known fact that no (non-trivial) map $K(\cdot)+(\cdot)K^*$ is completely positive: 
Indeed, this proposition shows that the only map $K(\cdot)+(\cdot)K^*$ in the \textit{vector space generated by} $\mathsf{CP}_B(n)$ is the zero map.
As noted in the introduction, this is quite non-trivial, because every $K(\cdot)+(\cdot)K^*$ is Hermitian-preserving, and the set of Hermitian-preserving maps 
is precisely the vector space generated by $\mathsf{CP}(n)$.
From this point-of-view, Proposition~\ref{prop1} shows that (and why) the trace condition ${\rm tr}(\Phi(B^*(\cdot)B))=0$ is the key to
separating $\mathsf{CP}(n)-\mathsf{CP}(n)$ from $\{K(\cdot)+(\cdot)K^*:K\in\mathbb C^{n\times n}\}$.\medskip

With this proposition at hand we are finally ready to state and prove our main result:

\begin{thm}\label{thm1} 
Given any $B\in\mathbb C^{n\times n}$ with ${\rm Re}({\rm tr}(B))\neq 0$, the map
\begin{align*}
\Xi_B:\{K\in\mathbb C^{n\times n}:{\rm Im}({\rm tr}(B^* K))=0\}\times\mathsf{CP}_B(n)&\to \mathsf L(\mathsf{CP}(n))\\
(K,\Phi)&\mapsto K(\cdot)+(\cdot)K^*+\Phi
\end{align*}
is bijective, where $\mathsf L(\mathsf{CP}(n))$ is the Lie wedge of
$\mathsf{CP}(n)$, cf.~Eq.~\eqref{eq:liewedge_cp}.
Equivalently, the map
\begin{equation}\label{eq:thm1_thetaB1}
\Theta_B:(H,Z,\Phi)\mapsto -i[H,\cdot]+\{Z,\cdot\}+\Phi\in\mathsf L(\mathsf{CP}(n))
\end{equation}
with domain $\{(H,Z)\in i\mathfrak u(n)\times i\mathfrak u(n):
\substack{{\rm Im}({\rm tr}(B^* Z))={\rm Re}({\rm tr}(B^* H))}
\}\times \mathsf{CP}_B(n)$
is bijective.
In particular, if $B\in\mathbb C^{n\times n}$ is Hermitian with ${\rm tr}(B)\neq 0$, then
\begin{align*}
\Theta_B:\{H\in i\mathfrak u(n):{\rm tr}(B H)=0\}\times i\mathfrak u(n)\times \mathsf{CP}_B(n)&\to \mathsf L(\mathsf{CP}(n))\\
(H,Z,\Phi)&\mapsto -i[H,\cdot]+\{Z,\cdot\}+\Phi
\end{align*}
is bijective.
\end{thm}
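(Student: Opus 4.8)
The plan is to show that $\Xi_B$ is bijective and then to read off the two $\Theta_B$-formulations by a change of variables. I would settle injectivity first, since it is essentially Proposition~\ref{prop1}. Suppose $\Xi_B(K_1,\Phi_1)=\Xi_B(K_2,\Phi_2)$; subtracting the representations gives $(K_1-K_2)(\cdot)+(\cdot)(K_1-K_2)^*=\Phi_2-\Phi_1\in\mathsf{CP}_B(n)-\mathsf{CP}_B(n)$. Since $\mathrm{Re}(\mathrm{tr}(B))\neq0$ forces $\mathrm{tr}(B)\neq0$, Proposition~\ref{prop1} yields $K_1-K_2=i\lambda\mathbf 1$ for some $\lambda\in\mathbb R$. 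Feeding this back into the defining constraint of the domain, $0=\mathrm{Im}(\mathrm{tr}(B^*(K_1-K_2)))=\mathrm{Im}(i\lambda\,\overline{\mathrm{tr}(B)})=\lambda\,\mathrm{Re}(\mathrm{tr}(B))$, so $\lambda=0$, whence $K_1=K_2$ and $\Phi_1=\Phi_2$. This is exactly the step where the strengthened hypothesis $\mathrm{Re}(\mathrm{tr}(B))\neq0$ (rather than merely $\mathrm{tr}(B)\neq0$) is used.

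For surjectivity I would pass to Choi matrices. Fix $L\in\mathsf{L}(\mathsf{CP}(n))$ and pick, via \eqref{eq:liewedge_cp}, any representation $L=K_0(\cdot)+(\cdot)K_0^*+\Phi_0$ with $\Phi_0\in\mathsf{CP}(n)$. Using $\mathsf{C}(K_0(\cdot)+(\cdot)K_0^*)=|\vec{K_0}\rangle\langle\Gamma|+|\Gamma\rangle\langle\vec{K_0}|$, the matrix $G:=\mathsf{C}(L)$ obeys $\langle x|G|x\rangle=\langle x|\mathsf{C}(\Phi_0)|x\rangle\geq0$ for all $|x\rangle\perp|\Gamma\rangle$. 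I write $G$ in $2\times2$ block form with respect to $\mathbb C^{n^2}=\mathbb C|\Gamma\rangle\oplus|\Gamma\rangle^\perp$; its lower-right block $G_\perp$ is then positive semi-definite. As $\tilde K$ ranges over $\mathbb C^{n\times n}$, the correction $|\vec{\tilde K}\rangle\langle\Gamma|+|\Gamma\rangle\langle\vec{\tilde K}|$ ranges over all Hermitian matrices with vanishing lower-right block; it leaves $G_\perp$ untouched, realizes an arbitrary off-diagonal block, and shifts the real top-left entry through the real part of the $|\Gamma\rangle$-component of $\vec{\tilde K}$ (its imaginary part not entering at all). Thus $N:=G-|\vec{\tilde K}\rangle\langle\Gamma|-|\Gamma\rangle\langle\vec{\tilde K}|$ can be arranged to have any prescribed real top-left entry $a$ and any off-diagonal block $|w\rangle$, with fixed lower-right block $G_\perp$.

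By Choi's theorem and Lemma~\ref{lemma0} (together with the argument of Lemma~\ref{lemma1}), the goal $\tilde\Phi:=\mathsf{C}^{-1}(N)\in\mathsf{CP}_B(n)$ is equivalent to $N\geq0$ and $N|\vec B\rangle=0$. Writing $|\vec B\rangle=\gamma|\Gamma\rangle+|\psi\rangle$ with $|\psi\rangle\perp|\Gamma\rangle$ and $\gamma\neq0$ (as $\langle\Gamma|\vec B\rangle=\mathrm{tr}(B)\neq0$), the two block-rows of $N|\vec B\rangle=0$ force $|w\rangle=-\tfrac1\gamma G_\perp|\psi\rangle$ and $a=\tfrac1{|\gamma|^2}\langle\psi|G_\perp|\psi\rangle\geq0$, both realizable by the previous paragraph. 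The crux — and the step I expect to be the main obstacle — is then verifying $N\geq0$, which I would do via the Schur complement relative to $G_\perp$: one has $|w\rangle\in\mathrm{ran}(G_\perp)$, and since $G_\perp G_\perp^+G_\perp=G_\perp$ the complement $a-\langle w|G_\perp^+|w\rangle=\tfrac1{|\gamma|^2}\langle\psi|G_\perp|\psi\rangle-\tfrac1{|\gamma|^2}\langle\psi|G_\perp|\psi\rangle=0$ vanishes identically, so $N\geq0$. Hence $\tilde\Phi$ is completely positive and, by construction, an element of $\mathsf{CP}_B(n)$. Finally I would use the one remaining degree of freedom — the imaginary part of the $|\Gamma\rangle$-component of $\vec{\tilde K}$, i.e.\ the replacement $\tilde K\mapsto\tilde K+i\lambda\mathbf 1$, which alters neither $N$ nor $L$ — to enforce $\mathrm{Im}(\mathrm{tr}(B^*\tilde K))=0$; this is solvable precisely because $\mathrm{Re}(\mathrm{tr}(B))\neq0$. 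The resulting pair $(\tilde K,\tilde\Phi)$ lies in the domain and satisfies $\Xi_B(\tilde K,\tilde\Phi)=L$.

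The two $\Theta_B$-statements I would then read off from the real-linear bijection $K=-iH+Z$ with $H=\tfrac i2(K-K^*)\in i\mathfrak u(n)$ and $Z=\tfrac12(K+K^*)\in i\mathfrak u(n)$, under which $K(\cdot)+(\cdot)K^*=-i[H,\cdot]+\{Z,\cdot\}$ and the condition $\mathrm{Im}(\mathrm{tr}(B^*K))=0$ becomes $\mathrm{Im}(\mathrm{tr}(B^*Z))=\mathrm{Re}(\mathrm{tr}(B^*H))$, matching the domain of \eqref{eq:thm1_thetaB1}. When $B$ is Hermitian, traces of products of Hermitian matrices are real, so $\mathrm{Im}(\mathrm{tr}(B^*Z))=0$ and $\mathrm{Re}(\mathrm{tr}(B^*H))=\mathrm{tr}(BH)$; the constraint then collapses to $\mathrm{tr}(BH)=0$ with $Z$ unconstrained, which is the stated Hermitian special case.
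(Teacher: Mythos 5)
Your proposal is correct, and while the injectivity half coincides with the paper's (both reduce to Proposition~\ref{prop1} and then use ${\rm Re}({\rm tr}(B))\neq 0$ to kill the residual $i\lambda{\bf 1}$), your surjectivity argument takes a genuinely different route. The paper works at the level of Kraus operators: it shifts $V_j\mapsto \tilde V_j=V_j-\frac{{\rm tr}(B^*V_j)}{{\rm tr}(B^*)}{\bf 1}$ and absorbs the cross terms into an explicit $\tilde K$ (the Davies/GKS ``completing the square'' identity), so that Lemma~\ref{lemma_trace_findim} immediately certifies membership in $\mathsf{CP}_B(n)$. You instead work at the Choi-matrix level: using $\mathsf C(K(\cdot)+(\cdot)K^*)=|\vec K\rangle\langle\Gamma|+|\Gamma\rangle\langle\vec K|$, you observe that $\mathsf C(L)$ is positive on $|\Gamma\rangle^\perp$, that the $K$-part exactly parametrizes the Hermitian $|\Gamma\rangle$-row/column (with ${\rm Im}({\rm tr}(K))$ as the one inert parameter), and that the kernel condition $N|\vec B\rangle=0$ forces the off-diagonal block and top-left entry, after which the generalized Schur complement vanishes identically and yields $N\geq 0$. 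This is a legitimate and arguably more structural proof: it explains \emph{why} the decomposition exists (the remainder after stripping the $|\Gamma\rangle$-row/column is automatically PSD) and it shows directly that the $\mathsf{CP}_B(n)$-component is uniquely forced, whereas the paper's computation is more explicit and hands you the shifted Lindblad operators in closed form. Two small points of care: your block formulas silently suppress the normalization $\||\Gamma\rangle\|=\sqrt n$ (harmless, since the factors cancel in the Schur complement), and you should note explicitly that $G=\mathsf C(L)$ is Hermitian (which follows since $L$, as a limit of differences of completely positive maps, is Hermitian-preserving) so that its top-left entry is real and the Schur-complement criterion applies.
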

\begin{proof}
It suffices to prove the statement regarding $\Xi_B$, as then bijectivity of $\Theta_B$ from Eq.~\eqref{eq:thm1_thetaB1} follows at once by decomposing $K=Z-iH$ with $H,Z\in i\mathfrak u(n)$.

First we prove that $\Xi_B$ is surjective. Given 
any generator $L\in \mathsf L(\mathsf{CP}(n))$,
by~\eqref{eq:liewedge_cp} there exist $K_0\in\mathbb C^{n\times n}$, $\Phi_0\in\mathsf{CP}(n)$ such that $L=K_0(\cdot)+(\cdot)K_0^*+\Phi_0$.
On the other hand there exist $\{V_j\}_{j=1}^m$, $m\leq n^2$ such that $\Phi_0=\sum_{j=1}^mV_j(\cdot)V_j^*$ \cite[Rem.~6]{Choi75}.
Defining $\tilde V_j:=V_j -\frac{{\rm tr}(B^*V_j)}{{\rm tr}(B^*)}{\bf 1} $
and
$$
\tilde K:=K_0+\sum_{j=1}^m\Big(\frac{{\rm tr}(B^*V_j)}{{\rm tr}(B^*)}\Big)^*V_j-\frac{\sum_{j=1}^m|{\rm tr}(B^*V_j)|^2}{2|{\rm tr}(B)|^2}{\bf1}\,,
$$
one finds $L=\tilde K(\cdot)+(\cdot)\tilde K^*+\sum_{j=1}^m\tilde V_j(\cdot)\tilde V_j^*$, cf.~\cite[Eq.~(1.4)]{Davies80unique}.
By definition, ${\rm tr}(B^*\tilde V_j)=0$ for all $j=1,\ldots,m$ so  Lemma~\ref{lemma_trace_findim} shows that $\Phi:=\sum_{j=1}^m\tilde V_j(\cdot)\tilde V_j^*$ is in $\mathsf{CP}_B(n)$.
Thus all that is left is to ``shift'' $\tilde K$ such that ${\rm Im}({\rm tr}(B^* \tilde K))=0$.
Obviously, replacing $\tilde K$ by $\tilde K+i\lambda{\bf 1}$, $\lambda\in\mathbb R$ does not change $L$.
For the shifted matrix the trace condition then reads
$$
0={\rm Im}({\rm tr}(B^* (\tilde K+i\lambda{\bf 1})))={\rm Im}({\rm tr}(B^* \tilde K))+\lambda{\rm Re}({\rm tr}(B^*))\,,
$$
so setting $K:=\tilde K-i\frac{{\rm Im}({\rm tr}(B^* \tilde K))}{{\rm Re}({\rm tr}(B))}{\bf 1}$ shows $(K,\Phi)\in{\rm dom}(\Xi_B)$ as well as $\Xi_B(K,\Phi)=
\tilde K(\cdot)+(\cdot)\tilde K^*+\Phi
=L$, as desired.

For injectivity, assume
$K_1(\cdot)+(\cdot)K_1^*+\Phi_1=K_2(\cdot)+(\cdot)K_2^*+\Phi_2$
for some $K_1,K_2\in\mathbb C^{n\times n}$ , $\Phi_1,\Phi_2\in\mathsf{CP}_B(n)$ such that ${\rm Im}({\rm tr}(B^* K_j))=0$, $j=1,2$; equivalently,
$$
(K_2-K_1)(\cdot)+(\cdot)(K_2-K_1)^*=\Phi_1-\Phi_2\in \mathsf{CP}_B(n)-\mathsf{CP}_B(n)\,.
$$
Thus by Proposition~\ref{prop1} $K_1=K_2+i\lambda{\bf1}$ for some $\lambda\in\mathbb R$.
This has two consequences: on the one hand, 
this imaginary difference between $K_1$ and $K_2$ cancels in the sense that
$K_1(\cdot)+(\cdot)K_1^*=K_2(\cdot)+(\cdot)K_2^*$;
this in turn implies $\Phi_1=\Phi_2$. 
On the other hand, $K_1=K_2+i\lambda{\bf1}$ together with the trace condition on $K_1,K_2$ yields
\begin{align*}
0={\rm Im}({\rm tr}(B^* K_1))
={\rm Im}({\rm tr}(B^* K_2))+\lambda{\rm Re}({\rm tr}(B^*))=\lambda{\rm Re}({\rm tr}(B))\,.
\end{align*}
But ${\rm Re}({\rm tr}(B))\neq 0$ by assumption so $\lambda$ has to vanish, meaning $K_1=K_2$.
This concludes the proof.
\end{proof}

As a special case we obtain a family of unique decompositions for arbitrary generators of quantum-dynamical semigroups;
all one has to do is combine Theorem~\ref{thm1} with the trace-annihilation condition (i.e.~$Z=-\frac12\Phi^*({\bf1})$ in Eq.~\eqref{eq:thm1_thetaB1}) to obtain:

\begin{corollary}\label{coro1}
For all $B\in\mathbb C^{n\times n}$ with ${\rm Re}({\rm tr}(B))\neq 0$ the map
\begin{align*}
\hat\Xi_B:(H,\Phi)&\mapsto -i[H,\cdot]+\Phi-\Big\{\frac{\Phi^*({\bf1})}{2},\cdot\Big\}\in\mathsf L(\mathsf{CPTP}(n))
\end{align*}
with domain $\{(H,\Phi)\in i\mathfrak u(n)\times \mathsf{CP}_B(n):\substack{{\rm Im}({\rm tr}(\Phi(B)))=2\,{\rm Re}({\rm tr}(B^* H))}\}
$
is bijective,
where $\mathsf L(\mathsf{CPTP}(n))$ is the Lie wedge of $\mathsf{CPTP}(n)$, cf.~Eq.~\eqref{eq:liewedge_cptp}.
In particular, if $B\in\mathbb C^{n\times n}$ is Hermitian with ${\rm tr}(B)\neq 0$, then
\begin{align*}
\hat\Xi_B:\{H\in i\mathfrak u(n):{\rm tr}(BH)=0\}\times \mathsf{CP}_B(n)&\to \mathsf L(\mathsf{CPTP}(n))\\
(H,\Phi)&\mapsto -i[H,\cdot]+\Phi-\Big\{\frac{\Phi^*({\bf1})}{2},\cdot\Big\}
\end{align*}
is bijective.
\end{corollary}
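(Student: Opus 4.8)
The plan is to realize $\hat\Xi_B$ as the restriction of the map $\Theta_B$ from Theorem~\ref{thm1} to the ``trace-annihilation slice'' $Z=-\tfrac12\Phi^*({\bf1})$, and to identify that slice with $\mathsf L(\mathsf{CPTP}(n))\subseteq\mathsf L(\mathsf{CP}(n))$. First I would recall that a generator $L=K(\cdot)+(\cdot)K^*+\Phi\in\mathsf L(\mathsf{CP}(n))$ lies in $\mathsf L(\mathsf{CPTP}(n))$ precisely when $e^{tL}$ is trace-preserving for all $t\geq0$, equivalently when ${\rm tr}(L(X))=0$ for every $X$. Since ${\rm tr}(L(X))={\rm tr}\big((K+K^*+\Phi^*({\bf1}))X\big)$ for all $X$, this amounts to $K+K^*=-\Phi^*({\bf1})$; decomposing $K=Z-iH$ with $H,Z\in i\mathfrak u(n)$ (so that $K+K^*=2Z$) this reads $Z=-\tfrac12\Phi^*({\bf1})$. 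Substituting back, $\Theta_B(H,-\tfrac12\Phi^*({\bf1}),\Phi)=-i[H,\cdot]+\Phi-\{\tfrac{\Phi^*({\bf1})}{2},\cdot\}=\hat\Xi_B(H,\Phi)$, so $\hat\Xi_B$ is exactly $\Theta_B$ along this slice, and by Eq.~\eqref{eq:liewedge_cptp} its image is $\mathsf L(\mathsf{CPTP}(n))$.

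The second step is to translate the domain constraint of $\Theta_B$, namely ${\rm Im}({\rm tr}(B^*Z))={\rm Re}({\rm tr}(B^*H))$, into the one stated in the corollary under the substitution $Z=-\tfrac12\Phi^*({\bf1})$. Here I would use that $\Phi^*({\bf1})$ is Hermitian, so that ${\rm tr}(B^*\Phi^*({\bf1}))=\overline{{\rm tr}(\Phi^*({\bf1})B)}$, together with the adjoint identity ${\rm tr}(\Phi^*({\bf1})B)={\rm tr}(\Phi(B))$. This gives ${\rm Im}({\rm tr}(B^*Z))=-\tfrac12{\rm Im}({\rm tr}(B^*\Phi^*({\bf1})))=\tfrac12{\rm Im}({\rm tr}(\Phi(B)))$, so the $\Theta_B$-constraint becomes exactly ${\rm Im}({\rm tr}(\Phi(B)))=2\,{\rm Re}({\rm tr}(B^*H))$, as claimed. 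This bookkeeping of conjugations and signs---making sure one lands on ${\rm tr}(\Phi(B))$ rather than ${\rm tr}(\Phi(B^*))$ and with the correct factor---is the one place where genuine care is required; everything else rides on Theorem~\ref{thm1}, so I do not expect a substantial obstacle beyond it.

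With these two observations bijectivity of $\hat\Xi_B$ follows from that of $\Theta_B$. For surjectivity, given $L\in\mathsf L(\mathsf{CPTP}(n))\subseteq\mathsf L(\mathsf{CP}(n))$, Theorem~\ref{thm1} yields a unique triple $(H,Z,\Phi)\in{\rm dom}(\Theta_B)$ with $L=\Theta_B(H,Z,\Phi)$; trace-preservation of $L$ forces $Z=-\tfrac12\Phi^*({\bf1})$ by the first step, and the translated constraint from the second step places $(H,\Phi)$ in ${\rm dom}(\hat\Xi_B)$ with $\hat\Xi_B(H,\Phi)=L$. For injectivity, if $\hat\Xi_B(H_1,\Phi_1)=\hat\Xi_B(H_2,\Phi_2)$, then writing each side as $\Theta_B(H_i,-\tfrac12\Phi_i^*({\bf1}),\Phi_i)$---both triples lying in ${\rm dom}(\Theta_B)$ precisely because $(H_i,\Phi_i)\in{\rm dom}(\hat\Xi_B)$ and the constraint translation is an equivalence---injectivity of $\Theta_B$ gives $H_1=H_2$ and $\Phi_1=\Phi_2$.

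Finally, for the Hermitian special case I would specialize the translated constraint. When $B=B^*$ and $H\in i\mathfrak u(n)$, the quantity ${\rm tr}(B^*H)={\rm tr}(BH)$ is real, so ${\rm Re}({\rm tr}(B^*H))={\rm tr}(BH)$; likewise, since $\Phi^*({\bf1})$ and $B$ are both Hermitian, ${\rm tr}(\Phi(B))={\rm tr}(\Phi^*({\bf1})B)$ is real, whence ${\rm Im}({\rm tr}(\Phi(B)))=0$ for every $\Phi\in\mathsf{CP}(n)$. The coupling constraint ${\rm Im}({\rm tr}(\Phi(B)))=2\,{\rm Re}({\rm tr}(B^*H))$ therefore collapses to ${\rm tr}(BH)=0$, a condition on $H$ alone, so the domain factorizes as the product $\{H\in i\mathfrak u(n):{\rm tr}(BH)=0\}\times\mathsf{CP}_B(n)$, recovering the stated form. (Note that for Hermitian $B$ one has ${\rm Re}({\rm tr}(B))={\rm tr}(B)\neq0$, so Theorem~\ref{thm1} indeed applies throughout.)
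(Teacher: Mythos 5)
Your proposal is correct and follows exactly the paper's route: restricting $\Theta_B$ from Theorem~\ref{thm1} to the trace-annihilation slice $Z=-\tfrac12\Phi^*({\bf1})$ and translating the domain constraint via ${\rm Im}({\rm tr}(B^*\Phi^*({\bf1})))=-{\rm Im}({\rm tr}(\Phi(B)))$, which is precisely the one step the paper flags as non-trivial. Your additional bookkeeping (characterizing trace preservation as $K+K^*=-\Phi^*({\bf1})$ and the collapse to ${\rm tr}(BH)=0$ in the Hermitian case) is accurate and consistent with what the paper leaves implicit.
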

\noindent The only non-trivial step regarding how to adjust the domain of $\hat\Xi_B$ is that
$
{\rm Im}({\rm tr}(B^*\Phi^*({\bf1})))={\rm Im}({\rm tr}(\Phi(B^*)))=-{\rm Im}({\rm tr}(\Phi(B)))
$,
where in the second step one uses that complete positivity implies that $\Phi$ is Hermitian-preserving.\medskip

Setting $B={\bf1}$ in Coro.~\ref{coro1} yields that
\begin{align*}
\begin{split}\hat\Xi_{\bf1}:i\mathfrak{su}(n)\times\mathsf{CP}_{\bf1}(n)&\to\mathsf L(\mathsf{CPTP}(n))\\(H,\Phi)&\mapsto -i[H,\cdot]+\Phi-\Big\{\frac{\Phi^*({\bf1})}{2},\cdot\Big\}\end{split}
\end{align*}
is bijective
which recovers the old uniqueness result of Gorini, Kossakowski, and Sudarshan,
as discussed in the introduction.
%
%
%

This result begs the question: what is the effect of different choices of the matrix $B$ on the unique decomposition of generators?
Let us shine some light on this by means of a simple example:

\begin{example}\label{ex_bloch}
Consider
the Lindblad operators which generate the Bloch equations, i.e.~$H=\frac{\omega}{2}\sigma_z$, $V_1=\sqrt{\gamma_1}|0\rangle\langle 1|$, $V_2=\sqrt{\gamma_2}|1\rangle\langle 0|$, $V_3=\sqrt{\gamma_3}\sigma_z$ for some $\omega\in\mathbb R$, $\gamma_1,\gamma_2,\gamma_3\geq 0$, cf.~\cite[Ch.~8.4]{NC00}, \cite[Ch.~5.5]{AM11}.
This gives rise to the generator $L=-i[H,\cdot]-\sum_{j=1}^3( \frac12(V_j^*V_j(\cdot)+(\cdot)V_j^*V_j)-V_j(\cdot)V_j^* )$.
Invoking the superoperator matrix---which, given some $\Psi\in\mathcal L(\mathbb C^{n\times n})$ is the unique matrix $\hat\Psi\in\mathbb C^{n^2\times n^2}$ such that $\vec{\Psi(X)}=\hat\Psi\vec X$ for all $X\in\mathbb C^{n\times n}$---one finds the familiar form
\begin{align*}
\hat L=\begin{pmatrix}
-\gamma_2&0&0&\gamma_1\\
0&-\frac{\gamma_1+\gamma_2}{2}-2\gamma_3+i\omega&0&0\\
0&0&-\frac{\gamma_1+\gamma_2}{2}-2\gamma_3-i\omega&0\\
\gamma_2&0&0&-\gamma_1
\end{pmatrix}\,.
\end{align*}
Now let any $B\in\mathbb C^{2\times 2}$ with ${\rm Re}({\rm tr}(B))\neq 0$ be given and consider $\hat\Xi_B^{-1}(L)=(H_B,\Phi_B)$ from Corollary~\ref{coro1}.
Using the short-hand notation
$\beta_{jk}:=\frac{b_{jk}}{b_{11}+b_{22}}$
and
$\beta:=\frac{\gamma_1\beta_{12}-\gamma_2\beta_{21}^*}2$
one computes
\begin{align*}
\widehat{-i[H_B,\cdot]}=\begin{pmatrix}
0&\beta&\beta^*&0\\
-\beta^*&i\omega+2\gamma_3(\beta_{11}^*-\beta_{11})&0&\beta^*\\
-\beta&0&-i\omega+2\gamma_3(\beta_{22}^*-\beta_{22})&\beta\\
0&-\beta&-\beta^*&0
\end{pmatrix}
\end{align*}
and
\begin{equation}\label{eq:ex_gammaB}
\widehat{{\bf\Gamma}_B}=\begin{pmatrix}
\gamma_2&\beta&\beta^*&-\gamma_1\\
-\beta^*& \frac{\gamma_1+\gamma_2}{2}+ 2\gamma_3( \beta_{11}^*+\beta_{22} ) &0&\beta^*\\
-\beta&0&\frac{\gamma_1+\gamma_2}{2}+ 2\gamma_3( \beta_{11}+\beta_{22}^* ) &\beta\\
-\gamma_2&-\beta&-\beta^*&\gamma_1
\end{pmatrix}\,,
\end{equation}
where ${\bf\Gamma}_B:=\{\tfrac{\Phi_B^*({\bf1})}{2},\cdot\}-\Phi_B$ so $L=-i[H_B,\cdot]-{\bf\Gamma}_B$.
Therefore, if $B$ is such that $\beta=0$ and $b_{11}^*b_{22}\in\mathbb R$ (the latter being equivalent to $\beta_{11}^*=\beta_{11}$), one recovers the original 
off-diagonal scaling of $[H,\cdot],{\bf\Gamma}$.
Moreover, if $B\in i\mathfrak u(n)$, then~\eqref{eq:ex_gammaB} simplifies to
\begin{align*}
\widehat{{\bf\Gamma}_B}=\begin{pmatrix}
\gamma_2&\frac{b_{12}}2(\gamma_1-\gamma_2)&\frac{b_{12}^*}2(\gamma_1-\gamma_2)&-\gamma_1\\
-\frac{b_{12}^*}2(\gamma_1-\gamma_2)& \frac{\gamma_1+\gamma_2}{2}+ 2\gamma_3&0&\frac{b_{12}^*}2(\gamma_1-\gamma_2)\\
-\frac{b_{12}}2(\gamma_1-\gamma_2)&0&\frac{\gamma_1+\gamma_2}{2}+ 2\gamma_3&\frac{b_{12}}2(\gamma_1-\gamma_2)\\
-\gamma_2&-\frac{b_{12}}2(\gamma_1-\gamma_2)&-\frac{b_{12}^*}2(\gamma_1-\gamma_2)&\gamma_1
\end{pmatrix}\,,
\end{align*}
\end{example}
In summary, for the dynamical semigroup governing the Bloch equations,
a general Hermitian $B$ (via Coro.~\ref{coro1}) yields a dissipator which need not be covariant anymore.
Moreover, waiving Hermiticity of $B$ leads to imaginary $\gamma_3$-contributions in both the dissipative and, equally, the Hamiltonian part.

%

\subsection{Orthogonality of the Generator's Unique Components}\label{sec_orth}

The zero-trace condition does not only guarantee a unique decomposition of $L\in\mathsf L(\mathsf{CPTP}(n))$ into closed and dissipative components, it also ensures that these two are orthogonal to each other.
Indeed, the Lindblad operators being traceless is a known sufficient condition for mutual orthogonality of the components of $L$ \cite[Lemma~2.4]{Diss-Indra}: A straightforward computation---together with Eq.~\eqref{eq:trace_AB}---shows
\begin{align*}
\langle i[H,\cdot],{\bf\Gamma}\rangle&=-i\,{\rm tr}([H,\cdot]^\dagger\circ{\bf\Gamma})\\
&=-i\,{\rm tr}([H,\cdot]\circ{\bf\Gamma})
=-i\,{\rm tr}({\bf\Gamma}([H,\cdot]))=\sum_j 2\,{\rm Im}\big( {\rm tr}(V_j){\rm tr}(V_j^*H) \big)
\end{align*}
for any $H\in i\mathfrak u(n)$ and any choice of Lindblad operators $\{V_j\}_j$ of ${\bf\Gamma}$, meaning that if ${\rm tr}(V_j)=0$ for all $j$ (equivalently: ${\rm tr}(\Phi)=0$, Lemma~\ref{lemma_trace_findim}), then $\langle i[H,\cdot],{\bf\Gamma}\rangle=0$.
This orthogonality of $i[H,\cdot]$ and ${\bf\Gamma}$ in the traceless case turns out to be special, and it does not hold for general $B$ (see Prop.~\ref{prop_orth} below).
To see this we first need the following lemma:

\begin{lemma}\label{lemma_innerproducts}
Given any $H,Z,B\in i\mathfrak u(n)$ with $B$ positive definite, as well as any $\Phi\in\mathsf{CP}(n)$ the following statements hold.
\begin{itemize}
\item[(i)] $\langle i[H,\cdot],\{Z,\cdot\}\rangle_B=i\,{\rm tr}(B){\rm tr}(Z[B,H])$
\item[(ii)] $\langle i[H,\cdot],\Phi\rangle_B=\sum_j 2\,{\rm Im}({\rm tr}(HBV_j){\rm tr}(BV_j)^*)$ where $\{V_j\}_j$ is an arbitrary choice of Kraus operators of $\Phi$.
\item[(iii)] $\langle \{Z,\cdot\},\Phi\rangle_B=\sum_j 2\,{\rm Re}({\rm tr}(ZBV_j){\rm tr}(BV_j)^*)$ where $\{V_j\}_j$ is an arbitrary choice of Kraus operators of $\Phi$.
\end{itemize}
In particular, if $\Phi\in\mathsf{CP}_B(n)$, then $\langle i[H,\cdot],\Phi\rangle_B=\langle \{Z,\cdot\},\Phi\rangle_B=0$.
\end{lemma}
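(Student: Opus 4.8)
The plan is to reduce every inner product to the single building block \eqref{eq:innerprod_XYB}, which evaluates $\langle X_1(\cdot)Y_1,X_2(\cdot)Y_2\rangle_B={\rm tr}(X_1^*BX_2){\rm tr}(Y_2BY_1^*)$. The first step is to write each superoperator as a sum of multiplication maps: $i[H,\cdot]=iH(\cdot){\bf 1}-i{\bf 1}(\cdot)H$, $\{Z,\cdot\}=Z(\cdot){\bf 1}+{\bf 1}(\cdot)Z$, and $\Phi=\sum_jV_j(\cdot)V_j^*$. Since $\langle\cdot,\cdot\rangle_B$ is additive in each argument (and conjugate-linear in the first, which is precisely what puts the adjoints on $X_1,Y_1$ in \eqref{eq:innerprod_XYB}), each inner product splits into a finite sum of terms to which the product formula applies verbatim. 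Throughout I would use that $H,Z,B$ are Hermitian, so $H^*=H$, $Z^*=Z$, $B^*=B$; this is the single fact that turns the various traces into complex conjugates of one another, e.g.\ ${\rm tr}(V_j^*B)={\rm tr}(BV_j)^*$ and ${\rm tr}(V_j^*BH)={\rm tr}(HBV_j)^*$.

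For (i) the expansion produces four terms. I expect the two ``mixed'' terms (pairing $H$ on the left against $Z$ on the right, and vice versa) to evaluate to $\mp i\,{\rm tr}(HB){\rm tr}(ZB)$ and cancel by cyclicity of the trace, while the two ``aligned'' terms combine into $i\,{\rm tr}(B)\big({\rm tr}(ZBH)-{\rm tr}(HBZ)\big)$. Recognizing ${\rm tr}(ZBH)-{\rm tr}(HBZ)={\rm tr}(Z[B,H])$ (again by cyclicity) then yields the claimed $i\,{\rm tr}(B){\rm tr}(Z[B,H])$.

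For (ii) and (iii) I would work term by term in $j$. Abbreviating $a_j={\rm tr}(HBV_j)$ (resp.\ ${\rm tr}(ZBV_j)$) and $c_j={\rm tr}(BV_j)$, the two multiplication-map contributions in each summand are complex conjugates of one another up to a factor $\pm i$: for (ii) they give $-ia_j\bar c_j+ic_j\bar a_j=2\,{\rm Im}(a_j\bar c_j)$, and for (iii) they give $a_j\bar c_j+c_j\bar a_j=2\,{\rm Re}(a_j\bar c_j)$. Summing over $j$ reproduces the stated formulas; the only non-bookkeeping ingredients here are the elementary identities $-iw+i\bar w=2\,{\rm Im}(w)$ and $w+\bar w=2\,{\rm Re}(w)$.

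Finally, the ``in particular'' clause is immediate from (ii) and (iii): by Lemma~\ref{lemma_trace_findim}, $\Phi\in\mathsf{CP}_B(n)$ means ${\rm tr}(B^*V_j)={\rm tr}(BV_j)=0$ for every Kraus operator (using $B^*=B$), so $c_j=0$ for all $j$ and every summand in (ii) and (iii) vanishes. The main obstacle, such as it is, is purely organizational: keeping the conjugate-linear-in-the-first-argument convention straight and invoking Hermiticity in exactly the right places so that the four- and two-term expansions collapse as claimed. The one spot that genuinely requires care is part (i) --- verifying the cancellation of the mixed terms and performing the commutator rewriting --- whereas (ii) and (iii) are essentially direct applications of \eqref{eq:innerprod_XYB}.
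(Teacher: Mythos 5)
Your proposal is correct and follows essentially the same route as the paper's proof: decompose the commutator/anticommutator/Kraus form into multiplication maps, apply Eq.~\eqref{eq:innerprod_XYB} term by term, use Hermiticity of $H,Z,B$ to turn the resulting traces into complex conjugates, and invoke Lemma~\ref{lemma_trace_findim} for the final clause. The cancellation of the mixed terms in (i) and the identities $-iw+i\bar w=2\,{\rm Im}(w)$, $w+\bar w=2\,{\rm Re}(w)$ are exactly the steps the paper carries out.
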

\begin{proof}
These are straightforward computations using linearity \& Eq.~\eqref{eq:innerprod_XYB}. (i):
\begin{align*}
\langle i[H,\cdot],\{Z,\cdot\}\rangle_B&= -i\langle H(\cdot),Z(\cdot)\rangle_B-i\langle H(\cdot),(\cdot)Z\rangle_B\\
&\qquad\qquad+i\langle (\cdot)H,Z(\cdot)\rangle_B+i\langle (\cdot)H,(\cdot)Z\rangle_B\\
&=-i\,{\rm tr}(H B Z){\rm tr}(B)-i\,{\rm tr}(HB ){\rm tr}(ZB)\\
&\qquad\qquad+i\,{\rm tr}( B Z){\rm tr}(BH)+i\,{\rm tr}( B ){\rm tr}(ZBH)\\
&=i\,{\rm tr}(B){\rm tr}( ZBH-ZHB)=i\,{\rm tr}(B){\rm tr}(Z[B,H])\,.
\end{align*}
(ii): 
\begin{align*}
\langle i[H,\cdot],\Phi\rangle_B&=-i\sum_j\langle H(\cdot),V_j(\cdot)V_j^*\rangle_B
+i\sum_j\langle(\cdot)H,V_j(\cdot)V_j^*\rangle_B\\
&=-i\sum_j{\rm tr}(H B V_j){\rm tr}(V_j^*B)
+i\sum_j{\rm tr}( B V_j){\rm tr}(V_j^*B H)\\
&=-i\sum_j\big({\rm tr}(H B V_j){\rm tr}(BV_j)^*-{\rm tr}(H B V_j)^*{\rm tr}(BV_j)  \big)  \\
&= \sum_j 2\,{\rm Im}({\rm tr}(HBV_j){\rm tr}(BV_j)^*) \,.
\end{align*}
The computation for (iii) is analogous.
Finally, the additional statement follows from (ii) \& (iii) together with the fact that $\Phi\in\mathsf{CP}_B(n)$ is equivalent to $0={\rm tr}(B^* V_j)={\rm tr}(B V_j)$, cf.~Lemma~\ref{lemma_trace_findim}.
\end{proof}

This lemma lets us quickly verify which (resp.~under what conditions/in what sense) parts of dynamical generators are orthogonal.

\begin{proposition}\label{prop_orth}
Given any $B\in i\mathfrak u(n)$ positive definite, the following statements hold.
\begin{itemize}
\item[(i)] The set $\{K(\cdot)+(\cdot)K^*:K\in\mathbb C^{n\times n}\}$ is orthogonal to $\mathsf{CP}_B(n)$ with respect to the modified inner product $\langle\cdot,\cdot\rangle_B$ from Eq.~\eqref{eq:innerprod_L_B}.
\item[(ii)] The set $\{-i[H,\cdot]:H\in i\mathfrak u(n), {\rm tr}(HB)=0\}$ is
orthogonal to the set $\{\Phi-\{\tfrac{\Phi^*({\bf1})}{2},\cdot\}\,:\,\Phi\in\mathsf{CP}_B(n)\}$ with respect to $\langle\cdot,\cdot\rangle_B$ if and only if $B$ is a multiple of the identity.
\end{itemize}
\end{proposition}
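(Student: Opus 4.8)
The plan is to handle the two parts separately, in each case reducing the $B$-inner product to the explicit formulas of Lemma~\ref{lemma_innerproducts}. For part (i) I would split an arbitrary $K\in\mathbb C^{n\times n}$ into its Hermitian and anti-Hermitian pieces, $K=Z-iH$ with $Z=\tfrac12(K+K^*)$ and $H=\tfrac{i}{2}(K-K^*)$, both of which lie in $i\mathfrak u(n)$. A one-line computation gives $K(\cdot)+(\cdot)K^*=\{Z,\cdot\}-i[H,\cdot]$, so for any $\Phi\in\mathsf{CP}_B(n)$, pulling the real coefficients out of the (conjugate-linear) first slot yields $\langle K(\cdot)+(\cdot)K^*,\Phi\rangle_B=\langle\{Z,\cdot\},\Phi\rangle_B-\langle i[H,\cdot],\Phi\rangle_B$, and both summands vanish by the ``in particular'' clause of Lemma~\ref{lemma_innerproducts}. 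This settles part (i) with no case distinction on $B$.

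For part (ii) the first step is to evaluate the inner product in closed form. Writing $Z:=\tfrac12\Phi^*({\bf1})\in i\mathfrak u(n)$ and pulling the real factor $-1$ out of the first slot, linearity in the second slot gives $\langle -i[H,\cdot],\Phi-\{Z,\cdot\}\rangle_B=-\langle i[H,\cdot],\Phi\rangle_B+\langle i[H,\cdot],\{Z,\cdot\}\rangle_B$. The first term is $0$ because $\Phi\in\mathsf{CP}_B(n)$ (Lemma~\ref{lemma_innerproducts}(ii)), and the second equals $i\,{\rm tr}(B){\rm tr}(Z[B,H])$ by Lemma~\ref{lemma_innerproducts}(i); using cyclicity to rewrite ${\rm tr}(\Phi^*({\bf1})[B,H])={\rm tr}([\Phi^*({\bf1}),B]H)$, the inner product becomes $\tfrac{i}{2}{\rm tr}(B)\,{\rm tr}([\Phi^*({\bf1}),B]H)$. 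This isolates the commutator $[\Phi^*({\bf1}),B]$ as the sole obstruction. The easy implication is then immediate: if $B=c{\bf1}$ the commutator vanishes for every $\Phi$, so the two sets are orthogonal.

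The substance lies in the converse, which I would prove by contraposition: if $B$ is not a scalar multiple of ${\bf1}$, I exhibit one pair $(H,\Phi)$ with nonzero inner product. Being positive definite and non-scalar, $B$ has two eigenvalues $b_k\neq b_l$ with orthonormal eigenvectors $|e_k\rangle,|e_l\rangle$. I would take the single Kraus operator $V:=|e_k\rangle\langle e_k|+|e_k\rangle\langle e_l|-\tfrac{b_k}{b_l}|e_l\rangle\langle e_l|$, for which ${\rm tr}(BV)=b_k-b_k=0$, so $\Phi:=V(\cdot)V^*\in\mathsf{CP}_B(n)$ by Lemma~\ref{lemma_trace_findim}, while $\langle e_k|V^*V|e_l\rangle=\langle Ve_k|Ve_l\rangle=1$. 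Since $([V^*V,B])_{kl}=(V^*V)_{kl}(b_l-b_k)\neq0$, the Hermitian matrix $P:=V^*V$ does not commute with $B$, i.e.\ $M:=-i[P,B]$ is a nonzero, traceless Hermitian matrix. It remains to choose $H$: the projection $H:=M-\frac{{\rm tr}(MB)}{{\rm tr}(B^2)}B$ is Hermitian with ${\rm tr}(HB)=0$, and ${\rm tr}(MH)={\rm tr}(M^2)-\frac{{\rm tr}(MB)^2}{{\rm tr}(B^2)}>0$ by the strict Cauchy--Schwarz inequality for $\langle\cdot,\cdot\rangle_{\sf HS}$ (strict because $M$ is traceless while ${\rm tr}(B)\neq0$, so $M$ and $B$ are linearly independent). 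Substituting $[P,B]=iM$ back in, the inner product equals $-\tfrac12{\rm tr}(B){\rm tr}(MH)\neq0$, contradicting orthogonality.

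The hard part is exactly this converse direction. The conceptual point I would stress is that orthogonality collapses entirely onto the single commutator $[\Phi^*({\bf1}),B]$; the task then becomes finding a map in $\mathsf{CP}_B(n)$ whose Kraus operators are $\langle\cdot,\cdot\rangle_{\sf HS}$-orthogonal to $B$ yet whose ``effect'' $\Phi^*({\bf1})$ fails to commute with $B$. The mild surprise is that a rank-one $\Phi$ with a single non-normal Kraus operator already suffices, after which the matching $H$ is forced by a routine projection-plus-Cauchy--Schwarz argument in the real space of Hermitian matrices.
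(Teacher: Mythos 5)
Your proof is correct. Part (i) and the forward reduction in part (ii) match the paper: you decompose $K=Z-iH$, invoke Lemma~\ref{lemma_innerproducts}, and observe that for $\Phi\in\mathsf{CP}_B(n)$ the whole inner product collapses to the single term $\tfrac{i}{2}{\rm tr}(B)\,{\rm tr}(\Phi^*({\bf1})[B,H])$, which settles the ``if'' direction. Where you genuinely diverge is the ``only if'' direction. The paper hand-picks both $H$ and a Kraus operator $V_0$ (then shifts $V_0$ by a multiple of the identity to land in $\mathsf{CP}_B(n)$) and grinds out the overlap $\langle Vg_k|Vg_j\rangle$ explicitly, arriving at the value $(b_j-b_k)({\rm tr}(B)-b_j)$. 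You instead build a single Kraus operator $V$ that is already ${\rm tr}(BV)=0$ by construction, note that the only obstruction is the commutator $[V^*V,B]$, and then let $H$ be \emph{forced} as the $\langle\cdot,\cdot\rangle_{\sf HS}$-projection of $M=-i[V^*V,B]$ onto the orthogonal complement of $B$, with strict Cauchy--Schwarz (valid since $M$ is traceless and nonzero while ${\rm tr}(B)>0$) guaranteeing ${\rm tr}(MH)>0$ and hence a nonzero inner product $-\tfrac12{\rm tr}(B){\rm tr}(MH)$. Your route buys a cleaner conceptual statement --- orthogonality fails exactly because one can find $\Phi\in\mathsf{CP}_B(n)$ with $[\Phi^*({\bf1}),B]\neq 0$, after which a suitable $H$ always exists by a projection argument --- at the cost of a slightly less explicit final value; the paper's route gives a concrete closed-form inner product but obscures why the construction works. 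All the small verifications in your argument check out: ${\rm tr}(BV)=b_k-b_k=0$, $(V^*V)_{kl}=\langle Ve_k|Ve_l\rangle=1$ so $[V^*V,B]\neq0$, $H$ is Hermitian with ${\rm tr}(HB)=0$, and the linear independence of $M$ and $B$ needed for strictness follows from ${\rm tr}(M)=0\neq{\rm tr}(B)$.
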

\begin{proof}
(i): Decomposing $K=Z-iH$ with $Z,H\in i\mathfrak u(n)$ one sees that $\langle K(\cdot)+(\cdot)K^*,\Phi\rangle_B=\langle -i[H,\cdot],\Phi\rangle_B+\langle \{Z,\cdot\},\Phi\rangle_B$ which by Lemma~\ref{lemma_innerproducts} is zero.

(ii): Before proving this statement let us do the following general computation:
using Lemma~\ref{lemma_innerproducts},
one for all $H\in i\mathfrak u(n)$ and all $\Phi\in\mathsf{CP}(n)$ finds
\begin{align}\label{eq:prop_orth_1}
\begin{split}
\big\langle i[H,\cdot],\Phi-\big\{&\tfrac{\Phi^*({\bf1})}{2},\cdot\big\}\big\rangle_B=\langle i[H,\cdot],\Phi\rangle_B-\big\langle i[H,\cdot],
\big\{\tfrac{\Phi^*({\bf1})}{2},\cdot\big\}\big\rangle_B\\
&=\sum_j 2\,{\rm Im}({\rm tr}(HBV_j){\rm tr}(BV_j)^*)-\tfrac{i}2{\rm tr}(B){\rm tr}( \Phi^*({\bf1})[B,H] )\,.
\end{split}
\end{align}

``$\Leftarrow$'': W.l.o.g.~$B={\bf1}$. Then Eq.~\eqref{eq:prop_orth_1} implies 
$\langle i[H,\cdot],\Phi-\{\tfrac{\Phi^*({\bf1})}{2},\cdot\}\rangle=\sum_j 2\,{\rm Im}({\rm tr}(HV_j){\rm tr}(V_j)^*$.
But all summands vanish because ${\rm tr}(V_j)=0$ for all $j$ by Lemma~\ref{lemma_trace_findim}.

``$\Rightarrow$'': We prove this by contrapositive. W.l.o.g.~$n>1$.
Assume that $B$ is not a multiple of the identity, i.e.~$B=\sum_j b_j|g_j\rangle\langle g_j|$ where $\{g_j\}_j$ is some orthonormal basis of $\mathbb C^n$ and the $b_j$ are positive numbers such that $b_j\neq b_k$ for some $j\neq k$.
Define $H:=b_k|g_j\rangle\langle g_j|+|g_j\rangle\langle g_k|+|g_k\rangle\langle g_j|-b_j|g_k\rangle\langle g_k|$ and $V_0:=i|g_j\rangle\langle g_j|+|g_j\rangle\langle g_k|-|g_k\rangle\langle g_k|$.
Moreover, based on this, define $V:=V_0-\frac{{\rm tr}(BV_0)}{{\rm tr}(B)}{\bf 1}$ and $\Phi:=V(\cdot)V^*$.
The following statements are readily verified:
\begin{itemize}
\item $H\in i\mathfrak u(n)$ with ${\rm tr}(BH)=0$
\item $[B,H]=(b_j-b_k)(|g_j\rangle\langle g_k|-|g_k\rangle\langle g_j|)$
\item ${\rm tr}(BV)=0$, which by Lemma~\ref{lemma_trace_findim} shows $\Phi\in\mathsf{CP}_B(n)$
\item ${\rm tr}(BV_0)=ib_j-b_k$
\end{itemize}
We claim that
\begin{equation}\label{eq:prop_orth_2}
\big\langle i[H,\cdot],\Phi-\big\{\tfrac{\Phi^*({\bf1})}{2},\cdot\big\}\big\rangle_B
=(b_j-b_k)({\rm tr}(B)-b_j)\,,
\end{equation}
which
would conclude the proof because then we found elements from either set that are not $\langle\cdot,\cdot\rangle_B$-orthogonal (as $b_j\neq b_k$, and ${\rm tr}(B)>b_j$ due to $B>0$).
Indeed, ${\rm tr}(BV)=0$ together with Eq.~\eqref{eq:prop_orth_1} implies
\begin{align*}
\big\langle i[H,\cdot],\Phi-\big\{\tfrac{\Phi^*({\bf1})}{2},\cdot\big\}\big\rangle_B&=-\tfrac{i}2\,{\rm tr}(B){\rm tr}(V^*V[B,H])\\
&=-\tfrac{i}2(b_j-b_k){\rm tr}(B)\big(\langle Vg_k|Vg_j\rangle-\langle Vg_j|Vg_k\rangle\big)\,.
\end{align*}
As $V|g_j\rangle=(i-\frac{ib_j-b_k}{{\rm tr}(B)})|g_j\rangle$ and $V|g_k\rangle=|g_j\rangle-(1+\frac{ib_j-b_k}{{\rm tr}(B)})|g_k\rangle$, the overlap of these vectors amounts to $\langle Vg_k|Vg_j\rangle=({\rm tr}(B))^{-1}
(b_k+i({\rm tr}(B)-b_j))$.
With this we arrive at
\begin{align*}
\big\langle i[H,\cdot],\Phi-\big\{\tfrac{\Phi^*({\bf1})}{2},\cdot\big\}\big\rangle_B&=(b_j-b_k){\rm tr}(B){\rm Im}(\langle Vg_k|Vg_j\rangle)\\
&=(b_j-b_k){\rm tr}(B)({\rm tr}(B))^{-1}({\rm tr}(B)-b_j)\,,
\end{align*}
i.e.~Eq.~\eqref{eq:prop_orth_2} holds and we are done.
\end{proof}
Rephrasing Proposition~\ref{prop_orth} (i), the decomposition from Theorem~\ref{thm1}
is not only unique, but even orthogonal with respect to the modified inner product $\langle\cdot,\cdot\rangle_B$.

\section{Positivity Instead of Complete Positivity}\label{sec_dec_pos}

As Theorem~\ref{thm1} generalizes unique decompositions of
$\mathsf{CPTP}$-dynamics to arbitrary $\mathsf{CP}$-dynamics, a natural
question is whether the assumption of complete positivity is necessary, or whether this result continues to hold in more general scenarios (e.g., positive dynamics)?
A naive first guess would be that complete positivity is essential as positivity of the Choi matrix was central to our proof.
To substantiate this, in the following we will investigate decompositions of the transposition map ${(\cdot)}^\top$, which is the prime example of a positive map
which is not completely positive.

It turns out that ${(\cdot)}^\top$---which is in $\mathsf L(\mathsf P(n))$, i.e.~it generates a positive dynamical semigroup, cf.~\cite[Ex.~6]{Chru14}---can be decomposed into $K(\cdot)+(\cdot)K^*+\Phi$ with $K\in\mathbb C^{n\times n}$, ${\rm Im}({\rm tr}(B^* K))=0$, and $\Phi\in\mathsf{P}_B(n)$ only for \textit{some}---but not
\textit{all}---$B\in\mathbb C^{n\times n}$ with ${\rm Re}({\rm tr}(B))\neq 0$.
More precisely, we find the following characterization of when such a decomposition exists:
\begin{proposition}\label{prop_transpos_decomp}
Given any $B\in\mathbb C^{n\times n}$ with ${\rm Re}({\rm tr}(B))\neq 0$, the following statements are equivalent.
\begin{itemize}
\item[(i)] There exist $K\in\mathbb C^{n\times n}$ and $\Phi\in\mathsf P(n)$ such that ${\rm Im}({\rm tr}(B^*K))=0$, ${\rm tr}(\Phi(B^*(\cdot)B)=0$,
and ${(\cdot)}^\top=K(\cdot)+(\cdot)K^*+\Phi$.
\item[(ii)] ${\rm tr}(B\overline{B})\leq 0$ where $\overline{B}:=(B^*)^\top$ is the entrywise complex conjugate.
\end{itemize}
\end{proposition}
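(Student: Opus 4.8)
\emph{Strategy.} The plan is to first collapse the two trace conditions into a single scalar identity, then treat the two implications separately: (ii)$\Rightarrow$(i) will be an explicit construction, while (i)$\Rightarrow$(ii) is the genuinely hard direction. The reduction goes through Lemma~\ref{lemma0} applied with $X=B^*$, $Y=B$ to the three summands of $\Phi=(\cdot)^\top-K(\cdot)-(\cdot)K^*$. A direct computation via Eq.~\eqref{eq:trace_Phi} gives ${\rm tr}((\cdot)^\top\circ\Phi_{B^*,B})={\rm tr}(B\overline B)$, and Eq.~\eqref{eq:trace_AB} handles the two derivation terms, so that
\[
{\rm tr}(\Phi(B^*(\cdot)B))={\rm tr}(B\overline B)-2\,{\rm Re}({\rm tr}(B)\,{\rm tr}(B^*K)).
\]
Writing $s:={\rm tr}(B^*K)$ and $p:={\rm Re}({\rm tr}(B))\neq 0$, the two conditions ${\rm Im}(s)=0$ and ${\rm tr}(\Phi(B^*(\cdot)B))=0$ are jointly equivalent to ``$s\in\mathbb R$ and ${\rm tr}(B\overline B)=2sp$''. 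The whole statement thus reduces to deciding for which $B$ the scalar $s$ forced by this identity is compatible with positivity of $\Phi$.

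\emph{Direction (ii)$\Rightarrow$(i).} The key sub-fact I would isolate is that, for real $c$, the map $(\cdot)^\top-c(\cdot)$ lies in $\mathsf P(n)$ if and only if $c\le 0$: evaluating on $|\psi\rangle\langle\psi|$ yields $|\overline\psi\rangle\langle\overline\psi|-c|\psi\rangle\langle\psi|$, which is positive semi-definite for every $\psi$ exactly when $c\le 0$, and the general case follows by linearity. I would then take $K=(\tfrac c2+i\lambda){\bf 1}$; the anti-Hermitian part $i\lambda{\bf 1}$ cancels in $K(\cdot)+(\cdot)K^*=c(\cdot)$, so $\Phi=(\cdot)^\top-c(\cdot)$. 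Choosing $c={\rm tr}(B\overline B)/|{\rm tr}(B)|^2\le 0$ makes $\Phi$ positive and enforces $2sp={\rm tr}(B\overline B)$, while $\lambda={\rm Im}({\rm tr}(B))\,c/(2p)$ arranges ${\rm Im}(s)=0$; a short check then confirms all three requirements of (i).

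\emph{Direction (i)$\Rightarrow$(ii).} This is where the main obstacle lies. First I would reduce to $p>0$: replacing $B$ by $-B$ leaves both trace conditions, the value ${\rm tr}(B\overline B)$, and the decomposition untouched (the same $K,\Phi$ work), but flips the sign of $p$. Arguing by contraposition, assume ${\rm tr}(B\overline B)>0$; the reduction then forces $s={\rm tr}(B\overline B)/(2p)>0$. Positivity of $\Phi$ supplies two handles. On the one hand, its Choi matrix $\mathsf C(\Phi)=\mathsf C((\cdot)^\top)-|\vec K\rangle\langle\Gamma|-|\Gamma\rangle\langle\vec K|$ is block positive; in particular $|\langle\psi|\overline\psi\rangle|^2\ge 2\,{\rm Re}\langle\psi|K|\psi\rangle$ for all unit $\psi$. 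On the other hand, averaging $\Phi$ over the conjugation action of the real orthogonal group (which fixes $(\cdot)^\top$ and sends $K\mapsto U^*KU$, and over which positivity is preserved since the positive maps form a closed convex cone) produces the positive map $(\cdot)^\top-\tfrac{2\,{\rm Re}({\rm tr}K)}{n}(\cdot)$, whence ${\rm Re}({\rm tr}K)\le 0$ by the sub-fact above.

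The difficulty is that $s={\rm tr}(B^*K)$ is a $B$-weighted quantity, whereas positivity only yields inequalities tested against \emph{positive} operators, so it naturally controls unweighted or scalar data (such as ${\rm Re}({\rm tr}K)$) but not $s$ directly. The crux, which I expect to be the technical heart of the proof, is to convert the forced largeness of $s$ into a violation of block positivity on a concrete product (or separable) test operator assembled from the singular value decomposition of $B$, thereby reconstructing ${\rm tr}(B\overline B)>0$ from product-vector expectations of $\mathsf C(\Phi)$ and contradicting (i).
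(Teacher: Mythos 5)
Your reduction of the two trace conditions to ``$s:={\rm tr}(B^*K)\in\mathbb R$ and ${\rm tr}(B\overline B)=2sp$'' is correct and is exactly the computation the paper performs (its Eq.~\eqref{eq:tr_BBline} together with Eq.~\eqref{eq:trace_AB}), and your direction (ii)$\Rightarrow$(i) is essentially identical to the paper's: your $K=(\tfrac c2+i\lambda)\mathbf 1$ with $c={\rm tr}(B\overline B)/|{\rm tr}(B)|^2$ and $\lambda={\rm Im}({\rm tr}(B))c/(2p)$ is literally the paper's choice $K=\frac{{\rm tr}(B\overline B)\,{\rm tr}(B)}{2\,{\rm Re}({\rm tr}(B))\,|{\rm tr}(B)|^2}\mathbf 1$, and your sub-fact that $(\cdot)^\top-c\,{\rm id}\in\mathsf P(n)$ for $c\le 0$ is how the paper argues positivity of $\Phi$ as well. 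That half is fine.

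The direction (i)$\Rightarrow$(ii), however, is not proved: you establish only ${\rm Re}({\rm tr}\,K)\le 0$ (via the orthogonal-group averaging) and the block-positivity inequality $|\langle\psi|\overline\psi\rangle|^2\ge 2\,{\rm Re}\langle\psi|K|\psi\rangle$, and you then explicitly defer ``the crux'' of converting this into control of the $B$-weighted quantity $s={\rm tr}(B^*K)$. This is a genuine gap, and it cannot be closed by any argument that pins down only scalar invariants of $K$: since $s$ depends on every entry of $K$ for generic $B$, you must determine $K$ itself. The missing structural fact---which is the actual content of the paper's proof of this direction---is that positivity of $\Phi=(\cdot)^\top-K(\cdot)-(\cdot)K^*$ forces $K=c\,\mathbf 1$ with ${\rm Re}(c)\le 0$. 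The paper gets this by testing $\Phi$ on concrete rank-one projectors rather than on averages: for $j\neq k$ the matrix $\Phi(|j\rangle\langle j|)$ is positive semi-definite with vanishing $(k,k)$ entry, so the $2\times 2$ principal minor on rows/columns $\{j,k\}$ forces its $(j,k)$ entry---which is an off-diagonal entry of $-K^*$---to vanish, hence $K$ is diagonal; testing on $|\mathbf e\rangle\langle\mathbf e|$ with $\mathbf e=(1,\dots,1)^\top$ and again using $2\times2$ minors forces $K_{jj}=K_{kk}$; and the vector $\psi=(1,i)^\top\oplus 0$, which lies in $\ker(|\psi\rangle\langle\psi|)^\top$, gives ${\rm Re}(c)\le 0$. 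Once $K=c\,\mathbf 1$, your own identity $ {\rm tr}(B\overline B)=2\,{\rm Re}({\rm tr}(B)\,{\rm tr}(B^*K))=2\,{\rm Re}(c)|{\rm tr}(B)|^2\le 0$ finishes the argument. Your proposed alternative---violating block positivity on product vectors built from the SVD of $B$---is unlikely to succeed as stated, because such tests naturally produce quantities of the form $\langle\psi|K|\psi\rangle$ or $\langle\psi|K|\overline\psi\rangle$ rather than ${\rm tr}(B^*K)$ for a general (non-normal) $B$; the route through the full determination of $K$ appears unavoidable.
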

\begin{proof}
``(ii) $\Rightarrow$ (i)'': Choose $K:=\frac{{\rm tr}(B\overline{B}){\rm tr}(B)}{2{\rm Re}({\rm tr}(B))|{\rm tr}(B)|^2}{\bf 1}$ and $\Phi:={(\cdot)}^\top-\frac{{\rm tr}(B\overline{B})}{|{\rm tr}(B)|^2}{\rm id}$, and note that these expressions
are well defined because ${\rm Re}({\rm tr}(B))\neq 0$ by assumption.
Then ${\rm tr}(B^*K)=\frac{{\rm tr}(B\overline{B})}{|{\rm tr}(B)|^2}$ so ${\rm Im}({\rm tr}(B^*K))=0$ due to ${\rm tr}(B\overline{B})\leq 0$ (which in particular means ${\rm tr}(B\overline{B})\in\mathbb R$).
Moreover, $-{\rm tr}(B\overline{B})\geq 0$ guarantees that $\Phi$ is
positive as a non-negative linear combination of positive maps.
The decomposition $K(\cdot)+(\cdot)K^*+\Phi={(\cdot)}^\top$ is readily verified and, finally, Eq.~\eqref{eq:trace_Phi} \& Eq.~\eqref{eq:trace_AB} yield
\begin{align}
{\rm tr}(\Phi(B^*(\cdot)B)&={\rm tr}(B^\top (\cdot)^\top\overline{B})-\frac{{\rm tr}(B\overline{B})}{|{\rm tr}(B)|^2}{\rm tr}(B^*(\cdot)B)\notag\\
&= \sum_{j,k}\langle j|B^\top|k\rangle\langle j|\overline{B}|k\rangle-\frac{{\rm tr}(B\overline{B})}{|{\rm tr}(B)|^2}|{\rm tr}(B)|^2\notag\\
&=  \sum_{j,k}\langle k|B|j\rangle\langle j|\overline{B}|k\rangle-
{\rm tr}(B\overline{B})=0\,,\label{eq:tr_BBline}
\end{align}
as desired.

``(i) $\Rightarrow$ (ii)'': 
Assume that there exist $K\in\mathbb C^{n\times n}$, $\Phi\in\mathsf P(n)$ such that ${\rm Im}({\rm tr}(B^*K))=0$, ${\rm tr}(\Phi(B^*(\cdot)B)=0$, and ${(\cdot)}^\top=K(\cdot)+(\cdot)K^*+\Phi$.
If we can show that ${(\cdot)}^\top-K(\cdot)-(\cdot)K^*(=\Phi)$ being positive implies $K=c\cdot{\bf 1}$ for some $c\in\mathbb C$ with ${\rm Re}(c)\leq 0$, then we would be done:
by Eq.~\eqref{eq:trace_AB} \& Eq.~\eqref{eq:tr_BBline}
\begin{align*}
0={\rm tr}(\Phi(B^*(\cdot)B)&={\rm tr}\big( B^\top(\cdot)^\top\overline{B} \big)-{\rm tr}( KB^*(\cdot)B )-{\rm tr}( B^*(\cdot)BK^* )\\
&={\rm tr}(B\overline{B})-2{\rm Re}(c)|{\rm tr}(B)|^2
\end{align*}
i.e.~${\rm Re}(c)=\frac{{\rm tr}(B\overline{B})}{2|{\rm tr}(B)|^2}$.
But ${\rm Re}(c)\leq 0$ so this would imply ${\rm tr}(B\overline{B})\leq 0$, as desired.

Now assume that $K\in\mathbb C^{n\times n}$ is any matrix such that $\Phi:={(\cdot)}^\top-K(\cdot)-(\cdot)K^*$ is positive.
Given any $j\neq k$, a direct computation shows
$
\langle k|\Phi(|j\rangle\langle j|)|k\rangle=0
$. But $\Phi$ is positive so all principal minors of $\Phi(|j\rangle\langle j|)$ are non-negative \cite[Coro.~7.1.5]{HJ1}; in particular,
\begin{align*}
0&\leq \det\begin{pmatrix}
\langle j|\Phi(|j\rangle\langle j|)|j\rangle&\langle j|\Phi(|j\rangle\langle j|)|k\rangle\\
\langle k|\Phi(|j\rangle\langle j|)|j\rangle&\langle k|\Phi(|j\rangle\langle j|)|k\rangle
\end{pmatrix}\\
&= \det\begin{pmatrix}
\langle j|\Phi(|j\rangle\langle j|)|j\rangle&\langle j|\Phi(|j\rangle\langle j|)|k\rangle\\
\langle k|\Phi(|j\rangle\langle j|)|j\rangle&0
\end{pmatrix}=-\big|\langle j|\Phi(|j\rangle\langle j|)|k\rangle  \big|^2
\end{align*}
so
$0=\langle j|\Phi(|j\rangle\langle j|)|k\rangle=-K_{jk}
$
for all $j\neq k$, 
meaning $K$ has to be diagonal, i.e.~$K={\rm diag}(K_{11},\ldots,K_{nn})$ for some $K_{11},,\ldots,K_{nn}\in\mathbb C$.
Next note that, by assumption, $\Phi(|{\bf e}\rangle\langle {\bf e}|)\geq 0$ where ${\bf e}:=(1,\ldots,1)^\top$. Again all principal minors have to be non-negative so---because $K$ is diagonal---we for all $j\neq k$ compute
\begin{align*}
0&\leq\det\begin{pmatrix}
\langle j| \Phi(|{\bf e}\rangle\langle {\bf e}|) |j\rangle&\langle j| \Phi(|{\bf e}\rangle\langle {\bf e}|) |k\rangle\\\langle k| \Phi(|{\bf e}\rangle\langle {\bf e}|) |j\rangle&\langle k| \Phi(|{\bf e}\rangle\langle {\bf e}|) |k\rangle
\end{pmatrix}\\
&=\det\begin{pmatrix}
1-\langle j|K|{\bf e}\rangle-\langle{\bf e}|K^*| j\rangle&1-\langle j|K|{\bf e}\rangle-\langle{\bf e}|K^*|k\rangle\\
1-\langle k|K|{\bf e}\rangle-\langle{\bf e}|K^*|j\rangle&1-\langle k|K|{\bf e}\rangle-\langle{\bf e}|K^*|k\rangle
\end{pmatrix}\\
&=\det\begin{pmatrix}
1-K_{jj}-K_{jj}^*&1-K_{jj}-K_{kk}^*\\1-K_{kk}-K_{jj}^*&1-K_{kk}-K_{kk}^*
\end{pmatrix}\\
&= ( 1-K_{jj}-K_{jj}^* )( 1-K_{kk}-K_{kk}^* )-( 1-K_{jj}-K_{kk}^* )( 1-K_{kk}-K_{jj}^* ) \\
&= (K_{jj}+K_{jj}^*)(K_{kk}+K_{kk}^*)-(K_{jj}+K_{kk}^*)(K_{kk}+K_{jj}^*) \\
&= -K_{jj}K_{jj}^*+K_{jj}K_{kk}^*+K_{jj}^*K_{kk}-K_{kk}K_{kk}^* =-|K_{jj}-K_{kk}|^2
\end{align*}
which implies $K_{jj}=K_{kk}$.
Hence $K=c\cdot{\bf 1}$ for some $c\in\mathbb C$.
All that is left to show is that ${\rm Re}(c)\leq 0$. For this define $\psi:=(1,i)^\top\oplus 0_{n-2}$ and note that $\psi\in\ker(|\psi\rangle\langle\psi|)^\top)$.
Because $\Phi$ is positive one finds
\begin{align*}
0\leq\langle\psi|\Phi(|\psi\rangle\langle\psi|)|\psi\rangle=\langle\psi|(|\psi\rangle\langle\psi|)^\top|\psi\rangle-\langle\psi|K+K^*|\psi\rangle=-4\,{\rm Re}(c)\,,
\end{align*}
that is, ${\rm Re}(c)\leq 0$ and we are done.
\end{proof}

In particular, Proposition~\ref{prop_transpos_decomp} rules out decompositions of ${(\cdot)}^\top$ for common classes of matrices $B$ such as, e.g., 
diagonal or real symmetric
ones.
Examples of matrices $B$ which allow for a decomposition in the sense of Prop.~\ref{prop_transpos_decomp} (i) are $B=|0\rangle\langle 1|$ or---if one wants something positive semi-definite---
$$
B=\begin{pmatrix}
1&i\\-i&1
\end{pmatrix}\,.
$$

\section{Outlook}\label{sec_outlook}
The main contribution of this paper is an uncountable family of unique decompositions of generators of completely positive dynamical semigroups (Theorem~\ref{thm1}).
Moreover we proved that complete positivity is essential for this to hold.

An interesting question---in light of Proposition~\ref{prop_transpos_decomp}---is whether some weaker form of our uniqueness result holds for general \textit{positive} dynamics.
More precisely, does there for all $L\in\mathsf L(\mathsf P(n))$ \textit{exist} some $B$ 
such that $L$ decomposes uniquely into (some form of) a closed and a dissipative part?
As we saw by the example of the transposition map, beyond complete positivity it becomes non-trivial whether such a decomposition even exists in the first place.
Pursuing this question could also lead to new insights regarding the set of generators of positive dynamical semigroups $\mathsf L(\mathsf P(n))$.
After all, beyond the quantum Kolmogorov conditions \cite[Thm.~5]{Koss72} $\mathsf L(\mathsf P(n))$ has no known closed form like $\mathsf L(\mathsf{CP}(n))$ does.

Moreover, it could be interesting to study the effect of different choices of $B$ on the decomposition of generators in more detail.
We already saw that for such decompositions all Lindblad operators are
orthogonal to $B$, and that this decomposition even yields (in some sense) orthogonal components of the generator;
the latter may even be useful for studying symmetries of GKSL-generators, cf.~\cite{OSID17}.
However, while Example~\ref{ex_bloch} gave some insight into this issue for a well-known qubit generator,
connections of these decompositions to notions in quantum information theory (beyond the entanglement fidelity, cf.~Rem.~\ref{rem_entfid}) are yet to be drawn.


\section*{Acknowledgments}
I would like to thank
Sumeet Khatri
for his insight that Lemma~\ref{lemma0}---and thus Theorem~\ref{thm1}---holds even beyond Hermitian matrices.
Moreover I would like to thank Jonas Kitzinger and, again, Sumeet Khatri for proofreading a preliminary version of this manuscript.
This work has been supported by the Einstein Foundation (Einstein Research Unit on Quantum Devices) and the MATH+ Cluster of Excellence.

\bibliographystyle{mystyle}
\bibliography{../../../../../control21vJan20.bib}

\begin{thebibliography}{10}

\bibitem{AM11}
Amann, A. and M{\"u}ller-Herold, U., \emph{{Offene Quantensysteme: Die Primas
  Lectures.}}, Springer, Berlin Heidelberg, 2011.

\bibitem{Bhatia97}
Bhatia, R., \emph{{Matrix Analysis}}, Springer, New York, 1997.

\bibitem{BreuPetr02}
Breuer, H.-P. and Petruccione, F., \emph{{The Theory of Open Quantum Systems}},
  Oxford University Press, Oxford, 2002.

\bibitem{Choi75}
Choi, M.-D., \emph{Linear Algebra Appl.} \textbf{10} (1975), 285.

\bibitem{Chru14}
Chru{\'s}ci{\'n}ski, D., \emph{Open Syst. Inf. Dyn.} \textbf{21} (2014),
  1440004.

\bibitem{CKKS21}
Chru{\'s}ci{\'n}ski, D., Kimura, G., Kossakowski, A., and Shishido, Y.,
  \emph{Phys. Rev. Lett.} \textbf{127} (2021), 050401.

\bibitem{Kosloff21}
Dann, R. and Kosloff, R., \emph{Phys. Rev. Research} \textbf{3} (2021), 023006.

\bibitem{Kosloff21_2}
Dann, R. and Kosloff, R., \emph{Quantum} \textbf{5} (2021), 590.

\bibitem{Davies80unique}
Davies, E., \emph{Rep. Math. Phys.} \textbf{17} (1980), 249.

\bibitem{DHKS08}
Dirr, G., Helmke, U., Kurniawan, I., and Schulte-Herbr{\"u}ggen, T., \emph{Rep.
  Math. Phys.} \textbf{64} (2009), 93.

\bibitem{FM92}
Freiberger, T. and Matthieu, M., in: \emph{{Elementary Operators \&
  Applications}}, 179--187, World Scientific, Singapore, 1992.

\bibitem{GS23}
Gerry, M. and Segal, D., \emph{Phys. Rev. E} \textbf{107} (2023), 054115.

\bibitem{GKS76}
Gorini, V., Kossakowski, A., and Sudarshan, E., \emph{J. Math. Phys.}
  \textbf{17} (1976), 821.

\bibitem{Heinosaari12}
Heinosaari, T. and Ziman, M., \emph{{The Mathematical Language of Quantum
  Theory: From Uncertainty to Entanglement}}, Cambridge University Press,
  Cambridge, 2012.

\bibitem{HHL89}
Hilgert, J., Hofmann, K.-H., and Lawson, J., \emph{{Lie Groups, Convex Cones,
  and Semigroups}}, Clarendon Press, Oxford, 1989.

\bibitem{Holevo12}
Holevo, A., \emph{{Quantum Systems, Channels, Information: A Mathematical
  Introduction}}, De Gruyter Studies in Mathematical Physics 16, DeGruyter,
  Berlin, 2012.

\bibitem{HJ1}
Horn, R. and Johnson, C., \emph{{Matrix Analysis}}, Cambridge University Press,
  Cambridge, 1987.

\bibitem{HJ2}
Horn, R. and Johnson, C., \emph{{Topics in Matrix Analysis}}, Cambridge
  University Press, Cambridge, 1991.

\bibitem{HS21}
Hotz, R. and Schaller, G., \emph{Phys. Rev. A} \textbf{104} (2021), 052219.

\bibitem{Roadmap2022}
Koch, C., Boscain, U., Calarco, T., Dirr, G., Filipp, S., Glaser, S., Kosloff,
  R., Montangero, S., Schulte-Herbrüggen, T., Sugny, D., and Wilhelm, F.,
  \emph{EPJ Quantum Technol.} \textbf{9} (2022), 19.

\bibitem{Koss72}
Kossakowski, A., \emph{Bull. Acad. Pol. Sci., Ser. Sci. Math. Astron. Phys.}
  \textbf{20} (1972), 1021.

\bibitem{Diss-Indra}
Kurniawan, I., \emph{{Controllability Aspects of the Lindblad-Kossakowski
  Master Equation---A Lie-Theoretical Approach}}, Ph.D. thesis, Universit{\"a}t
  W{\"u}rzburg, 2009.

\bibitem{Lindblad76}
Lindblad, G., \emph{Commun. Math. Phys.} \textbf{48} (1976), 119.

\bibitem{MDAS19}
Moudgalya, S., Devakul, T., and Arovas, S., D.and~Sondhi, \emph{Phys. Rev. B}
  \textbf{100} (2019), 045112.

\bibitem{NR20}
Nathan, F. and Rudner, M., \emph{Phys. Rev. B} \textbf{102} (2020), 115109.

\bibitem{NC00}
Nielsen, M. and Chuang, I., \emph{Quantum Computation and Quantum Information},
  Cambridge University Press, Cambridge (UK), 2000.

\bibitem{Parthasarathy92}
Parthasarathy, K., \emph{{An Introduction to Quantum Stochastic Calculus}},
  Birkhäuser, Basel, 1992.

\bibitem{PP23}
Petruhanov, V. and Pechen, A., \emph{Photonics} \textbf{10} (2023), 220.

\bibitem{SSG23}
Saleem, Z., Shaji, A., and Gray, S., \emph{Phys. Rev. A} \textbf{108} (2023),
  022413.

\bibitem{SA20}
Schaller, G. and Ablaßmayer, J., \emph{Entropy} \textbf{22} (2020), 525.

\bibitem{OSID17}
Schulte-Herbr{\"u}ggen, T., Dirr, G., and Zeier, R., \emph{Open Syst. Inf.
  Dyn.} \textbf{24} (2017), 1740019.

\bibitem{Schumacher96}
Schumacher, B., \emph{Phys. Rev. A} \textbf{54} (1996), 2614.

\bibitem{SCE22}
Soret, A., Cavina, V., and Esposito, M., \emph{Phys. Rev. A} \textbf{106}
  (2022), 062209.

\bibitem{Trushechkin21}
Trushechkin, A., \emph{Phys. Rev. A} \textbf{103} (2021), 062226.

\bibitem{Trushechkin22}
Trushechkin, A., \emph{Phys. Rev. A} \textbf{106} (2022), 042209.

\bibitem{OSID_thermal_res}
vom Ende, F., Malvetti, E., Dirr, G., and Schulte-Herbr\"uggen, T., \emph{Open
  Syst. Inf. Dyn.} \textbf{30} (2023), 2350005.

\bibitem{Watrous18}
Watrous, J., \emph{{The Theory of Quantum Information}}, Cambridge University
  Press, Cambridge, 2018.

\end{thebibliography}
\end{document}